\newcommand{\J}{N} 
\newcommand{\M}{E} 
\newcommand{\vb}{v_0}
\title{Selfish, Local and Online Scheduling via Vector Fitting}
\author{Danish Kashaev \footnote{\texttt{danish.kashaev@cwi.nl}, Centrum Wiskunde \& Informatica, Amsterdam}}
\date{}
\begin{document}
\maketitle

\begin{abstract}
We provide a dual fitting technique on a semidefinite program yielding simple proofs of tight bounds for the robust price of anarchy of several congestion and scheduling games under the sum of weighted completion times objective. The same approach also allows to bound the approximation ratio of local search algorithms and the competitive ratio of online algorithms for the scheduling problem $R || \sum w_j C_j$. All of our results are obtained through a simple unified dual fitting argument on the same semidefinite programming relaxation, which can essentially be obtained through the first round of the Lasserre/Sum of Squares hierarchy.

As our main application, we show that the known coordination ratio bounds of respectively $4, (3 + \sqrt{5})/2 \approx 2.618,$ and  $32/15 \approx 2.133$ for the scheduling game $R || \sum w_j C_j$ under the coordination mechanisms Smith's Rule, Proportional Sharing and Rand (STOC 2011) can be extended to congestion games and obtained through this approach. For the natural restriction where the weight of each player is proportional to its processing time on every resource, we show that the last bound can be improved from 2.133 to 2. This improvement can also be made for general instances when considering the price of anarchy of the game, rather than the coordination ratio. As a further application of this technique in a game theoretic setting, we show that it recovers the tight bound of $(3 + \sqrt{5})/2$ for the price of anarchy of weighted affine congestion games and the Kawaguchi-Kyan bound of $(1+ \sqrt{2})/2$ for the pure price of anarchy of $P || \sum w_j C_j$. 

Moreover, we show that this approach recovers the known tight approximation ratio of $(3 + \sqrt{5})/2$ for a natural local search algorithm for $R || \sum w_j C_j$, as well as the best currently known combinatorial approximation algorithm for this problem achieving an approximation ratio of $(5 + \sqrt{5})/4 + \varepsilon \approx 1.809 + \varepsilon$, and provide an almost matching lower bound.

Finally, we show that this technique also extends to online algorithms by analyzing the greedy algorithm for $R || \sum w_j C_j$ achieving a competitive ratio of $4$ in an online setting where the arrival order of the jobs is adversarial and the ordering on each machine is optimal.
\end{abstract}

\section{Introduction}
A standard way of quantifying inefficiency of selfish behaviour in algorithmic game theory is the price of anarchy, introduced in \cite{koutsoupias1999worst}. It is defined as the ratio between the cost of a worst-case Nash equilibrium and the cost of a social optimum. This definition can be used to understand inefficiency of pure or mixed Nash equilibria, and can also be extended to more general notions, such as correlated or coarse-correlated equilibria. 

Developing tools to bound the price of anarchy is a central question, and several approaches have been proposed in the literature to tackle this problem. One technique that has been very successful for a variety of games is the smoothness framework, introduced in \cite{roughgarden2015intrinsic}. One advantage of this approach is that it automatically bounds the price of anarchy for all the different notions of equilibria mentioned above, yielding bounds on the robust price of anarchy of a game \cite{roughgarden2015intrinsic}.

Another possible avenue is to use convex relaxations to help bound the price of anarchy, as done in \cite{kulkarni2014robust}. The high-level approach is to formulate a convex relaxation of the underlying optimization problem of a given game, and to construct a feasible solution to the dual of that relaxation, whose cost can then be compared to the cost of an equilibrium. Bounding the ratio between the cost of the equilibrium and of the feasible dual solution then yields an upper bound on the price of anarchy by weak duality. 

In this paper, we build on this approach and show that a single convex semidefinite programming relaxation can be used to obtain tight (robust) price of anarchy bounds for several different congestion and scheduling games. This relaxation can in fact be obtained using the first round of the Lasserre hierarchy \cite{lasserre2001global}, and the proofs bounding the price of anarchy through the dual of that relaxation are surprisingly simple and essentially follow the same template for all the games considered. In addition to bounding the price of anarchy, it turns out that the same approach also allows to bound the approximation ratio of local search algorithms and the competitive ratio of online algorithms for machine scheduling.

As a main illustration of this technique, we consider the following model of congestion games. We are given a set of players $N$ and a set of resources $E$. The strategy set for each player $j \in N$ is a collection of subsets of resources and is denoted by $\mathcal{S}_j \subseteq 2^E$. Each player has a resource-dependent processing time $p_{ej} \geq 0$ and a weight $w_j \geq 0$. Once each player chooses a strategy, if a given resource $e \in E$ is shared by several players, then $e$ uses a \emph{coordination mechanism}, defined as a local policy for each resource, in order to process the players using it. One natural example of such a coordination mechanism is to order the players by increasing Smith ratios, defined as the ratio between the processing time on a resource and the weight of a given player \cite{smith1956various}. 

This model is a generalization of the unrelated machine scheduling game $R || \sum w_j C_j$, where each job needs to selfishly pick a machine to minimize its own weighted completion time, while knowing that each machine uses a coordination mechanism to process the jobs assigned to it. In our model, the set of resources $E$ is the set of machines, and the strategy set of each player is a subset of the machines. An important special case of our model, which generalizes $R || \sum w_j C_j$, is the following selfish routing game. We are given a directed graph $G = (V,E)$ and a set of players $N$. Each player $j$ wants to pick a path between a source node $s_j \in V$ and a sink node $t_j \in V$. The strategy set $\mathcal{S}_j$ for player $j \in N$ is the set of all paths between $s_j$ and $t_j$. A parallel link network where each player has the same source and sink node exactly corresponds to the $R || \sum w_j C_j$ scheduling problem. 

The work of \cite{cole2011inner} considers three different coordination mechanisms for $R || \sum w_j C_j$. Their main results are that \emph{Smith's Rule} leads to a tight price of anarchy of $4$, and this can be improved to $(3 + \sqrt{5})/2 \approx 2.618$ and $32/15 \approx 2.133$ by respectively considering a preemptive mechanism called \emph{Proportional Sharing}, as well as a randomized one named \emph{Rand}. The latter two results in fact bound the \emph{coordination ratio} of the coordination mechanism, meaning that the cost of a worst-case Nash equilibrium is compared to the cost of an optimal solution under \emph{Smith's Rule}, since this is always how an optimal solution processes the jobs once an assignment is given \cite{smith1956various}. The proof technique they use to obtain their results is based on the smoothness framework \cite{roughgarden2015intrinsic}. In order to exploit the structure of the problem, they map strategy vectors into a carefully chosen inner product space, where the social cost is closely related to a squared norm in that space. Generalizing their results to selfish routing games was mentioned as an open question.

The inner product space structure developed in \cite{cole2011inner} turns out to have a natural connection to semidefinite programming, since the latter can be seen as optimizing over inner products of vectors. In this work, we study this connection and show that it leads to simple dual fitting proofs that allow to tightly bound the price of anarchy for several different congestion and scheduling games in a unified way. Moreover, it also allows to bound the approximation ratio of local search algorithms, as well as the competitive ratio of online algorithms for such problems. We hope that this new approach might turn out to be useful in other contexts as well.

\subsection*{Our contributions}
Our main contribution is a unified dual fitting technique on a single semidefinite program (SDP) to bound the price of anarchy of games, the approximation ratio of local search algorithms, and the competitive ratio of online algorithms for problems whose underlying optimization problem can be cast as a binary quadratic program. We illustrate the applicability of this approach for different scheduling and congestion problems. The semidefinite program used can be obtained by applying one round of the Lasserre/Sum of Squares hierarchy to the exact binary quadratic program. 

We show that the three coordination ratio bounds of respectively $4, (3 + \sqrt{5})/2 \approx 2.618$ and $32/15 \approx 2.133$ for the policies \emph{Smith's Rule}, \emph{Proportional Sharing} and \emph{Rand} can be obtained through our approach in the above congestion game model. This yields alternative and simple proofs of these results in a more general model, which avoid the use of minimum norm distortion inequalities, as done in \cite{cole2011inner}. We moreover show that the last bound can be improved from 2.133 to 2 for the natural special case where the processing times are proportional to the weight of a given player on every feasible resource. This means that every resource has a real-value $\lambda_e \geq 0$, interpreted as the processing power, and the processing time of every player satisfies $p_{ej} \in \{\lambda_e w_j, \infty\}$ for every $e \in E, j \in N$. The importance of this model in a scheduling setting has been mentioned in \cite{kalaitzis2017unrelated}. This improvement from 2.133 to 2 can also be obtained for general instances if one considers the price of anarchy of the game, rather than the coordination ratio. This means that the cost of a worst-case Nash equilibrium is now compared against an optimal solution using the \emph{Rand} policy, rather than \emph{Smith's Rule}.

Moreover, we show that the same approach (on the same relaxation) can be used to bound the approximation ratio of local search algorithms for machine scheduling under the sum of weighted completion times objective. We first consider a natural algorithm whose local optima simply ensure that no job can decrease the global objective function by switching to a different machine. Observe the analogy with Nash equilibria, which ensure that no job can improve its own objective (or completion time) by switching machines. We recover the approximation ratios of $(3 + \sqrt{5})/2 \approx 2.618$ and $(5 + \sqrt{5})/4 \approx 1.809$ for the scheduling problems $R || \sum w_j C_j$ and $P | \mathcal{M}_j | \sum w_j C_j$ given in \cite{correa2022performance}. In addition, we also analyze an improved local search algorithm for $R || \sum w_j C_j$ attaining a bound of $(5 + \sqrt{5})/4 + \varepsilon \approx 1.809 + \varepsilon$ \cite{caragiannis2017coordination}, and show an almost matching lower bound of $1.791$. To the best of our knowledge, this is the currently best known \emph{combinatorial} approximation algorithm for this problem.

As a further illustration of the technique in a game theoretic setting, we apply it to two classical games and show that it yields simple proofs of known tight price of anarchy bounds. We first show how to get the tight bound of $(3+ \sqrt{5})/2$ for the price of anarchy of weighted affine congestion games. While a dual fitting proof through a convex relaxation of this bound is already provided in \cite{kulkarni2014robust}, this result showcases the versatility of our SDP relaxation and of the fitting strategy. In addition, a dual fitting proof of the Kawaguchi-Kyan bound of $(1 + \sqrt{2})/2$ for the pure price of anarchy of the scheduling game $P || \sum w_j C_j$ is also provided through the same relaxation. We note that the dual fitting strategy used for this result uses a reduction to worst-case instances of \cite{schwiegelshohn2011alternative}.

Finally, we also study the $R || \sum w_j C_j$ scheduling problem in the following online setting. A set of jobs arrives online in an adversarial order. Whenever a job arrives, an online algorithm needs to irrevocably assign it to a machine, at which point the job enters the schedule on that machine in the correct position with respect to the Smith ratio. This can equivalently be seen as the arrival order of the jobs as being online, but the ordering on each machine as being an offline decision. It is known that the greedy algorithm achieves a competitive ratio of $4$ \cite{gupta2020greed}. We show how to analyze this algorithm in our congestion model using our SDP dual fitting approach, illustrating how the technique can be adapted to online algorithms.

\subsection*{Further related work}
There is a vast literature on exact or approximation algorithms for scheduling problems under the (weighted) sum of completion times objective. We adopt the standard three-field notation $\alpha|\beta|\gamma$ of \cite{graham1979optimization}. The problem with unweighted completion times $R||\sum C_j$ is polynomial time solvable \cite{horn1973minimizing, bruno1974scheduling}. For $P||\sum C_j$ on parallel machines, the shortest first policy gives an optimal solution which also turns out to be a Nash equilibrium \cite{conway1967miller}. On the other hand, the weighted completion times objective is NP-hard even for $P || \sum w_j C_j$ \cite{lenstra1977complexity}. A PTAS is known for $P || \sum w_j C_j$ \cite{skutella2000ptas}, while $R || \sum w_j C_j$ is APX-hard \cite{hoogeveen1998non}. Constant factor approximation algorithms are however possible, with major results being a simple $3/2$-approximation by rounding a convex relaxation \cite{skutella2001convex, sethuraman1999optimal} and the first algorithm breaking the $3/2$-approximation using a semidefinite relaxation \cite{bansal2016lift}. We note that the primal semidefinite program used in our paper is very similar to their relaxation. Building on this, subsequent improvements have been made \cite{im2020weighted, im2023improved, harris2024dependent} with the current best (to the best of our knowledge) approximation algorithm for this problem obtaining a ratio of $1.36 + \varepsilon$ \cite{li2024approximating}. In the special case where Smith ratios are uniform, an improved bound of $(1+\sqrt{2})/2 + \varepsilon$ has been obtained \cite{kalaitzis2017unrelated}.

Scheduling problems have also been vastly studied from a game theoretic perspective. For $P || \sum w_j C_j$, the pure price of anarchy of \emph{Smith's Rule} coincides with the approximation ratio of a simple greedy algorithm and was shown to be $(1 + \sqrt{2})/2 \approx 1.207$ in a classic result of \cite{kawaguchi1986worst}. A much simpler proof of this result is shown in \cite{schwiegelshohn2011alternative}. Interestingly, the mixed price of anarchy of this game is higher, with a tight bound of $3/2$ even for $P || \sum C_j$ \cite{rahn2013bounding}. For the unweighted version, \emph{Smith's Rule} in fact reduces to the shortest processing time first policy, under which \cite{hoeksma2011price} shows an upper and lower bound of respectively $2$ for the robust price of anarchy and $e/(e-1) \approx 1.58$ for the pure price of anarchy of $Q || \sum C_j$. For related machines, it is still an interesting open question whether the upper bounds of respectively $2$ and $4$ for $Q || \sum C_j$ and $Q || \sum w_j C_j$ can be improved.

Coordination mechanisms were introduced in the work of \cite{christodoulou2004coordination} for $P || C_{max}$ and a selfish routing/congestion game. Four different scheduling games under four different policies were analyzed in \cite{immorlica2009coordination} under the makespan objective. Upper and lower bounds for different coordination mechanisms for $R || C_{max}$ can be found in \cite{azar2008almost, caragiannis2013efficient, fleischer2010preference, cohen2011non, abed2012preemptive}. Further work on coordination mechanisms for the makespan objective has been done in \cite{bhattacharya2014coordination, caragiannis2019almost, kollias2013nonpreemptive}.

The literature for the sum of completion times objective is somewhat sparser. The work of \cite{cole2011inner} considers $R || \sum w_j C_j$ and shows that the policies \emph{Smith's Rule}, \emph{Proportional Sharing} and \emph{Rand} respectively give bounds of $4, 2.618$ and $2.133$ on the robust price of anarchy. The first two bounds are tight, with matching lower bounds given in \cite{correa2012efficiency} and \cite{caragiannis2006tight}. The latter two coordination mechanisms can in fact be interpreted as a cost-sharing protocol \cite{caragiannis2017coordination}. Using similar techniques, \cite{abed2014optimal} extend some of the previous results to multi-job scheduling games. Coordination mechanisms for a more general model with release dates and assignment costs have been studied in \cite{bhattacharya2014coordination}.

The study of the price of anarchy for weighted congestion games was initiated in \cite{koutsoupias1999worst} for parallel links under the maximum load (or makespan) objective. Tight bounds for parallel links have been shown in \cite{czumaj2007tight}. For general networks under the MinSum objective with affine latency functions, the works of \cite{awerbuch2005price, christodoulou2005price} establish that the price of anarchy is $5/2$ for the unweighted version and $(3 + \sqrt{5})/2$ in the weighted case. Other models have been studied in \cite{aland2011exact,caragiannis2006tight, bhawalkar2014weighted, suri2004selfish, farzad2008priority}. To the best of our knowledge, the literature on coordination mechanisms for congestion/selfish routing games is relatively sparse \cite{christodoulou2004coordination, christodoulou2014improving, bhattacharya2014coordinationtree}.

\section{Preliminaries}

\paragraph{Game format.} All the games/problems considered in this paper are of the following form. A set of players $N$ is given. Each player $j \in N$ has a strategy set $\mathcal{S}_j$, and we denote by $x_{ij} \in \{0,1\}$ the binary value indicating whether the player chooses strategy $i \in \mathcal{S}_j$. If $x_{ij} \in [0,1]$, then this corresponds to the probability of player $j$ independently choosing strategy $i$. The (expected) cost incurred by the player is denoted by $C_j(x)$ and is a \emph{quadratic} (possibly non-convex) function of $x$. Given weights $w_j \geq 0$ for every player $j \in N$, the total social cost is the weighted sum of costs incurred by every player, and we denote it by $C(x) = \sum_{j} w_j C_j(x)$. 

\paragraph{Scheduling.} One example falling in this class are scheduling games. Given is a set of jobs $J = N$, which are the players, and a set of machines $M$. The strategy set of every player is a subset of the machines $\mathcal{S}_j \subseteq M$. We adopt the standard three-field notation $\alpha|\beta|\gamma$ of \cite{graham1979optimization}, with $\alpha$ denoting the machine environment, $\beta$ denoting the constraints, and $\gamma$ denoting the objective function. The most general such problem we consider is $R || \sum w_j C_j$, where each job $j \in N$ has unrelated processing times $p_{ij} \in \mathbb{R}_+ \cup \{\infty\}$ for each $i \in M$. If $p_{ij} = \infty$, we will without loss of generality assume that $i \notin \mathcal{S}_j$. Once an assignment $x$ is fixed, the optimal way to process the jobs for each machine is to order them by increasing Smith ratios, which we denote as $\delta_{ij} := p_{ij}/w_j$. We denote $k \prec_i j$ if $k$ precedes $j$ in the ordering of machine $i$, meaning that $\delta_{ik} \leq \delta_{ij}$. We assume ties are broken in a consistent way. The completion time of every job is then 
\[C_j(x) = \sum_{i \in M} x_{ij} \Big(p_{ij} + \sum_{k \prec_i j} p_{ik} x_{ik}\Big).\]
Observe that this is indeed a quadratic function in $x$. If every job has the same processing time $p_{ij} = p_j$ on every machine, this model is denoted by $P || \sum w_j C_j$. If $p_{ij} \in \{p_j, \infty\}$, then the model is denoted as $P | \mathcal{M}_j | \sum w_j C_j$.

\paragraph{Congestion model.} We consider the following model of congestion games, which generalize the scheduling games described above. We are given a set of players $N$ and a set of resources $E$. The strategy set for each player $j \in N$ is denoted by $\mathcal{S}_j \subseteq 2^E$ and is a collection of subsets of resources. Each player has a resource-dependent processing time $p_{ej} \geq 0$ and a weight $w_j \geq 0$. Without loss of generality, we assume that for every feasible strategy $i \in \mathcal{S}_j$ of a player $j \in N$, we have that $p_{ej} < \infty$ for every $e\in i$ (otherwise simply remove $i$ from $\mathcal{S}_j$ since it is not a valid strategy). The Smith ratio is defined as $\delta_{ej} = p_{ej}/w_j$ for every $e \in E, j \in N$. We denote $k \prec_e j$ if $\delta_{ek} \leq \delta_{ej}$, meaning that $k$ has a smaller Smith ratio than $j$ on the resource $e \in E$, where ties are broken in a consistent manner. For a given assignment $(x_{ij})_{j \in N, i \in \mathcal{S}_j}$, we denote
\[z_{ej} := \sum_{i \in \mathcal{S}_j : e \in i} x_{ij}.\]
We invite the reader to think about pure assignments. In that case, $x_{ij} \in \{0,1\}$ is binary and indicates whether or not player $j$ chooses strategy $i \in \mathcal{S}_j$, whereas $z_{ej} \in \{0,1\}$ takes value one if $j$ uses the resource $e \in E$, i.e. chooses a strategy $i \in \mathcal{S}_j$ containing resource $e \in E$. In the case of mixed assignments, $x_{ij} \in [0,1]$ represents the probability of player $j$ independently choosing strategy $i$, whereas $z_{ej} \in [0,1]$ represents the probability of player $j$ using resource $e$. Once an assignment $x$ is fixed, Smith's Rule is again the optimal way for every resource to process the jobs, and the cost incurred by a player $j \in N$ is given by:
\[C_j(x) = \sum_{i \in S_j} x_{ij} \sum_{e \in i}  \Big( p_{ej} +  \sum_{k \prec_e j} p_{ek} \: z_{ek} \Big). \]

\paragraph{Nash equilibria.} An assignment $x$ is a Nash equilibrium if no player can get a lower cost by changing his/her strategy. The price of anarchy is defined as the ratio between the cost of a worst-case Nash equilibrium and the cost of an optimal solution. Unless explicitly stated otherwise, we consider mixed Nash equilibria, meaning that the following set of constraints is satisfied:
\begin{equation}
\label{eq_nash_conditions_gen}
C_j(x) \leq C_j(x_{-j}, i) \qquad \forall j \in N, \forall i \in \mathcal{S}_j
\end{equation}
where $x_{-j}$ refers to the assignment of all players other than $j$. In Appendix \ref{section_robust}, we show how to extend our results to a more general equilibrium concept, namely a \emph{coarse-correlated} equilibrium.

\paragraph{Coordination mechanisms.} In the scheduling setting, a coordination mechanism is a set of local policies, one for each machine, deciding on how the jobs assigned to it should be processed. Smith's Rule is one example of such a policy, which is in fact optimal in terms of the social cost. However, picking a different policy may help improve the price of anarchy. One policy considered in this paper is a preemptive mechanism called \emph{Proportional Sharing}, where the jobs are scheduled in parallel, with each uncompleted job receiving a fraction of the processor time proportional to its weight. Another mechanism is \emph{Rand}, which orders the jobs randomly by ensuring that the probability of job $j$ being scheduled before $k$ is $\delta_{ik}/(\delta_{ij} + \delta_{ik})$ for every pair of jobs $j,k$. The reader is referred to \cite{cole2011inner} for details. In our congestion model, each resource uses one of these coordination mechanisms to process the players using that resource. Note that this modifies the cost $C_j(x)$ incurred by every player, and thus also the social cost $C(x)$. 

\paragraph{Coordination ratio and price of anarchy.} We make a distinction between the coordination ratio of a coordination mechanism and the price of anarchy of the game for our congestion model. The coordination ratio measures the ratio between a worst-case Nash equilibrium and an optimal solution if every resource uses Smith's Rule to process the players. In contrast, the price of anarchy of the game compares to a weaker optimal solution where each resource uses the chosen mechanism to process the players.

\paragraph{Outline of the paper.} The semidefinite programming relaxation and a high-level view of the approach is presented in Section \ref{section_SDP}. The analysis of the coordination ratio and the price of anarchy of \emph{Smith's Rule}, \emph{Proportional Sharing} and \emph{Rand} for our congestion model are presented in Section \ref{sec_cong_coor}. The analysis for the price of anarchy of weighted affine congestion games is shown in Section \ref{sec_weig_cong}. The analysis of local optima for machine scheduling is done in Section \ref{sec_local_opt}. The analysis of the greedy online algorithm is shown in Section \ref{sec_greedy_online}. The pure price of anarchy of $P || \sum w_j C_j$ is presented in Appendix \ref{sec_kawa_kyan}.

\section{The semidefinite programming relaxation}
\label{section_SDP}
\subsection{The primal-dual pair}
\label{sec_SDP}
We assume in this section some basics on semidefinite programs (SDPs), which can be found in Appendix \ref{sec_sdp_basics}. Let $N$ be a set of players, with each player $j \in N$ having a strategy set $\mathcal{S}_j$. An exact quadratic program to compute the social optimum is given by 

\begin{align*}
    \min \:  &C(x) \\
    &\sum_{i \in \mathcal{S}_j} x_{ij} = 1 \qquad \forall j \in N \\
    &x_{ij} \in \{0,1\} \qquad \forall j \in N, \forall i \in \mathcal{S}_j.
\end{align*}

\noindent Since we assume $C(x)$ to be quadratic in $x$, the social cost can be written as 
\begin{equation}
\label{eq_soc_cost_quadratic}
C(x) = C_{\{0,0\}} + 2\sum_{j \in N, i \in \mathcal{S}_j} C_{\{0, ij\}}\:x_{ij} + \sum_{\substack{j, k \in N\\ i \in \mathcal{S}_j, i' \in \mathcal{S}_k}} C_{\{ij, \: i'k\}}x_{ij} \: x_{i'k}
\end{equation}
for some symmetric matrix $C$ of dimension $1 + \sum_{j \in N}|\mathcal{S}_j|$, which has one row/column corresponding to each $x_{ij}$, in addition to one extra row/column that we index by $0$. The above equation \eqref{eq_soc_cost_quadratic} can be written in a compact way as $C(x) = \langle C, X \rangle := \text{Tr} (C^T X)$, where $X$ is the rank one matrix $X = (1,x) (1,x)^T$, where the notation $(1,x)$ refers to a vector in dimension $1 + \sum_{j \in N}|\mathcal{S}_j|$ obtained by appending a coordinate with value $1$ to $x$. 

We now consider a semidefinite convex relaxation of the above quadratic program, which can essentially be obtained through the Lasserre/Sum of Squares hierarchy \cite{lasserre2001global}. The variable of the program is a positive semidefinite matrix $X$ of dimension $1 + \sum_{j \in N}|\mathcal{S}_j|$, which has one row/column corresponding to each $x_{ij}$, in addition to one extra row/column that we index by $0$. 

\begin{align*}
    \min \langle C, X \rangle \qquad  \qquad &\\
    \sum_{i \in \mathcal{S}_j} X_{\{ij,\: ij\}} &= 1 \hspace{4cm} \forall j \in N \\
    X_{\{0,0\}} & = 1 \\
    X_{\{0, \: ij\}} &= X_{\{ij, \: ij\}} \hspace{3cm} \forall j \in N, i \in \mathcal{S}_j \\
    X_{\{ij, \: i'k\}} &\geq 0 \hspace{4cm} \forall {(i,j), (i',k)} \text{ with } j,k \in N. \\
    X & \succeq 0
\end{align*}
To see that this is in fact a relaxation to the previous quadratic program, note that for any binary feasible assignment $x$, the rank-one matrix $X = (1,x) (1, x)^T$ is a feasible solution to the SDP with the same objective value. The key observation that makes this work is the fact that $x_{ij}^2 = x_{ij}$ for $x_{ij} \in \{0,1\}$, leading to $X_{\{ij, \: ij\}} = x_{ij}^2 = x_{ij} = X_{\{0, \: ij\}}$. The dual to this relaxation, written in vector form, is the following. The computation of the dual is shown in Appendix \ref{sec_take_dual}. We call this relaxation \emph{(SDP-C)}.

\vspace{0.1cm}
\begin{align}
\label{dual_sdp}
    \max \sum_{j \in N} y_j  - &\frac{1}{2}\Vert v_0 \Vert ^ 2 \\
    y_j &\leq C_{\{ij, \; ij\}} - \frac{1}{2}\Vert v_{ij} \Vert ^ 2 + \: \langle v_0, v_{ij} \rangle \qquad \qquad \forall j \in N, i \in \mathcal{S}_j \nonumber \\
    \langle v_{ij}, v_{i'k} \rangle &\leq 2 \: C_{\{ij, \: i'k\}} \hspace{4.3cm} \forall (i,j) \neq (i',k) \text{ with } j,k \in N. \nonumber
\end{align}
\vspace{-0.2cm}

\noindent The variables of this program are real-valued $y_j \in \mathbb{R}$ for every $j \in N$, as well as vectors $v_0 \in \mathbb{R}^d$ and $v_{ij} \in \mathbb{R}^d$ for every $j \in N, i \in \mathcal{S}_j$ in some dimension $d \in \mathbb{N}$. This will be the relaxation used for every dual fitting argument in this paper. Depending on the problem we are considering, the matrix $C$, which only depends on the total social cost, is then picked accordingly. The computation of this matrix for every game considered is shown in Appendix \ref{sec_comp_dual}.

\subsection{High-level view of the approach and intuition of the dual}
We describe here a high-level view of the dual fitting approach and of its main ideas. We also provide some intuition in how the dual program \eqref{dual_sdp} is used. For clarity, we illustrate the concepts on a simple concrete toy example: a weighted load balancing game, which is a special case of an affine weighted congestion game later analyzed in full detail in Section \ref{sec_weig_cong}. 

\paragraph{Example: load balancing.} We are given a set of players $N$ and a set of resources $E$. The strategy set of every player $j \in N$ is a subset of resources $\mathcal{S}_j \subseteq E$ with unrelated weights $w_{ij} \geq 0$ associated for every $i \in \mathcal{S}_j$. Consider a pure assignment $x$, the \emph{load} of a resource $i \in E$ is defined as the total weight of players assigned to it and is formally defined as $\ell_i(x) = \sum_{j \in N} w_{ij} x_{ij}$. The cost of a player $j$ is then defined as $C_j(x) = \sum_{i \in E} \ell_i(x) \: w_{ij} \: x_{ij}$, meaning that it is the weight multiplied by the load of the resource picked. The social cost can be written as follows
\begin{equation}
\label{eq_social_cost_load_balancing}
C(x) = \sum_{j \in N} C_j(x) = \sum_{i \in E} \sum_{j, k \in N} w_{ij} \: w_{ik} \: x_{ij} \: x_{ik} = \sum_{\substack{j, k \in N\\ i \in \mathcal{S}_j, i' \in \mathcal{S}_k}} w_{ij} \: w_{ik} \: x_{ij} \: x_{i'k} \: \mathds{1}_{\{i = i'\}}.
\end{equation}
Note that the social cost can also be written in a simple way as $C(x) = \sum_{i \in E} \ell_i(x)^2$. The above equation is however written in the form \eqref{eq_soc_cost_quadratic}.

\paragraph{Specializing the dual SDP.} After understanding what the social cost looks like as a quadratic function in the form \eqref{eq_soc_cost_quadratic}, we are able to write down the dual program \eqref{dual_sdp} for a considered game. In our example, the above equation tells us that the matrix $C$ has diagonal entries $C_{\{ij, \; ij\}} = w_{ij}^2$ and non-diagonal entries $C_{\{ij, \: i'k\}} = w_{ij} \: w_{ik} \: \mathds{1}_{\{i = i'\}}$, meaning that we can write down the dual as:

\begin{align}
    \max \sum_{j \in N} y_j  - \frac{1}{2}  &\Vert v_0 \Vert ^ 2 \label{obj}\\
    y_j &\leq w_{ij}^2 - \frac{1}{2}\Vert v_{ij} \Vert ^ 2 + \langle v_0, v_{ij} \rangle \hspace{1.8cm} \forall j \in N, \forall i \in \mathcal{S}_j \label{first_set_constr}\\
    \langle v_{ij}, v_{i'k} \rangle &\leq 2 \: w_{ij} \: w_{ik} \: \mathds{1}_{\{i = i'\}} \hspace{3.1cm} \forall (i,j) \neq (i',k) \text{ with } j,k \in N. \label{second_set_constr}
\end{align}
\vspace{0.1cm}

\noindent Given any Nash equilibrium $x$, the goal is to use this dual program to construct a feasible solution with objective value at least $\rho \: C(x)$ for some $\rho \in [0,1]$. By weak duality, this would directly imply an upper bound of $1/\rho$ for the price of anarchy.
 
\paragraph{Correspondence between the SDP constraints and the Nash conditions.} The key insight is that the first set of constraints \eqref{first_set_constr} of the SDP has the same structure as that of the Nash equilibria inequalities \eqref{eq_nash_conditions_gen}. Our goal is to pick a fitting which will ensure that this set of constraints corresponds to (or is implied by) these equilibrium conditions. Fix a Nash equilibrium $x$ and let us write down what the Nash conditions imply for our toy example:

\[C_j(x) \leq w_{ij} \:  (\ell_i(x) + w_{ij}) = w_{ij}^2 + w_{ij} \: \ell_i(x) \qquad \forall j \in N, \forall i \in \mathcal{S}_j.\]
A natural way to achieve the desired correspondence is to have the following:
\begin{equation}
\label{eq_rough_fitting}
y_j \sim C_j(x) \quad , \quad w_{ij}^2 - \frac{1}{2} \Vert v_{ij} \Vert ^ 2 \sim w_{ij}^2 \quad, \quad \langle v_0, v_{ij} \rangle \sim w_{ij} \: \ell_i(x)
\end{equation}
where the $\sim$ notation indicates that both quantities are within a fixed constant (which should be the same for all three cases above) of each other. For local search algorithms, the Nash inequalities get replaced by local optima conditions. For online algorithms, these become inequalities satisfied by an online algorithm at every time step.

\paragraph{Picking the vector fitting.} Observe that the second correspondence above implies that $\Vert v_{ij} \Vert ^ 2 \sim w_{ij}^2$. The second set of SDP constraints \eqref{second_set_constr} tells us that for $i \neq i'$, one should have $\langle v_{ij}, v_{i'k} \rangle \leq 0$. We will in fact ensure tightness of this constraint by fitting such two vectors to be \emph{orthogonal}. A very natural candidate for the fitting of $v_{ij}$ in our example thus becomes the following choice:
\[v_{ij} \in \mathbb{R}^E \quad \text{ defined as  } \quad v_{ij}(e) = \alpha \: w_{ij} \: \mathds{1}_{\{i = e\}}\]
for some constant $\alpha \in [0, \sqrt{2}]$ to be determined. The upper bound on $\alpha$ follows again from the second set of SDP constraints \eqref{second_set_constr}, since we now get $\langle v_{ij}, v_{i'k} \rangle = \alpha^2 \: w_{ij} \: w_{i'k} \: \mathds{1}_{\{i = i'\}}$ under our fitting.

How should $v_0$ now be picked? There are two desirable properties to be satisfied: we want $\langle v_0, v_{ij} \rangle \sim w_{ij} \: \ell_i(x)$ as mentioned above, in addition to relating $\Vert v_0 \Vert ^2$ to the social cost $C(x)$, since it appears in the objective function of the SDP. A very natural candidate becomes the following:
\[v_0 \in \mathbb{R}^E \quad \text{ defined as  } \quad v_{0}(e) = \beta \: \ell_e(x)\]
where $\beta \geq 0$ is to be determined. Note that we now indeed get $\langle v_0, v_{ij} \rangle = \alpha \beta \: w_{ij} \: \ell_i(x)$ and $\Vert v_0 \Vert ^2 = \beta^2\: C(x)$, since \eqref{eq_social_cost_load_balancing} can be rewritten as $C(x) = \sum_{i \in E}\ell_i(x)^2$.

\paragraph{Optimizing the constants.} How should $\alpha$ and $\beta$ be picked? We have seen that $\alpha \in [0, \sqrt{2}]$ and $\beta \geq 0$. Observe that under our fitting, constraints \eqref{first_set_constr} now become $y_j \leq (1-\alpha^2/2) \: w_{ij}^2 + \alpha \beta \: w_{ij} \: \ell_i(x)$. Correspondence \eqref{eq_rough_fitting} then tells us to set $1-\alpha^2/2 = \alpha \beta$ and $y_j = \alpha \beta C_j(x)$. The objective value \eqref{obj} of the SDP then becomes $(\alpha \beta - \beta^2/2) \: C(x)$. Since we want to pick $\alpha$ and $\beta$ to maximize the dual objective in order to get the best possible bound on the price of anarchy, we would want to solve the following optimization problem:
\[\max \{\alpha \beta - \beta^2/2 : 1 - \alpha^2/2 = \alpha \beta, \alpha \in [0, \sqrt{2}], \beta \geq 0\}.\]
Solving this optimization problem would give a price of anarchy bound of $(3+\sqrt{5})/2$, which is tight in this setting by a lower bound construction of \cite{caragiannis2006tight}. At the high-level, this is the approach used to derive the results in this paper. We invite the reader to keep this intuition even for more complex games.
\subsection{Different inner product spaces}

 In order to construct a feasible solution to this SDP, one needs to construct vectors $v_0$ and $\{v_{ij}\}_{j \in N, i \in \mathcal{S}_j}$ living in a Euclidean space $\mathbb{R}^d$ for some $d > 0$, in addition to real values $\{y_j\}_{j \in N}$ such that both sets of constraints of the SDP are satisfied. Note that the inner product is the standard Euclidean one where, for given $f, g \in \mathbb{R}^d$, it is defined as
$\langle f, g \rangle := \sum_{i = 1}^d f_i g_i$. For some games, it will be more convenient to work in a different inner product space, as done in \cite{cole2011inner}. Let us fix a finite set $E$, where each $e \in E$ induces a finite set of positive real values $0 = \delta_0^{(e)} \leq \delta_1^{(e)} \leq \dots \leq \delta_n^{(e)}$. We define the following inner product space:
\begin{align*}
\mathcal{F}(E) := \left\{ f : E \times [0,\infty) \to \mathbb{R}_+ : f(e, t) = \sum_{j=1}^n \alpha_{ej} \: \mathds{1}_{\left\{\delta^{(e)}_{j-1} \: \leq t \:  \leq \: \delta^{(e)}_j\right\}}; \; \alpha_{ej} \in \mathbb{R} \quad \forall e \in E, \forall j \in [n] \right\}.
\end{align*}
In words, each element satisfies the fact that $f(e,\cdot)$ is a step-function with breakpoints at $\delta_1^{(e)} \leq \dots \leq \delta_n^{(e)}$ for every $e \in E$. A valid inner product for two $f,g \in \mathcal{F}(E)$ is then given by:
\begin{equation}
\label{eq_first_innerprod}
\langle f, g \rangle := \sum_{e \in E} \int_0^{\infty} f(e, t) \; g(e,t) \; dt.
\end{equation}
Another inner product space we will consider is the following. Let us fix $E$ to be a finite set and $K \in \mathbb{N}$. For any positive-definite matrix $M \in \mathbb{R}^{K \times K}$, we can consider the space $\mathcal{G}(E,M):= \mathbb{R}^{E \times [K]}$ where the inner product for two $f, g \in \mathcal{G}(E,M)$ is given by:
\begin{equation}
\label{eq_second_innerprod}
\langle f, g \rangle  := \sum_{e \in E} f(e, \cdot)^T \: M \: g(e, \cdot).
\end{equation}
We now argue that we can work in these spaces without loss of generality.
\begin{lemma}
\label{lemma_inner_product_space}
Any feasible dual fitting to (SDP-C) obtained in the inner product spaces $\mathcal{F}(E)$ and $\mathcal{G}(E,M)$ can be converted into a feasible dual fitting with the same objective value in a standard Euclidean space $\mathbb{R}^d$ for some $d > 0$ endowed with the standard inner product.
\end{lemma}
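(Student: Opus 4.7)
The plan is to exhibit explicit isometric isomorphisms from both $\mathcal{F}(E)$ and $\mathcal{G}(E,M)$ into standard Euclidean spaces. Since any dual fitting involves only finitely many vectors (namely $v_0$ together with the $v_{ij}$'s), and since the objective \eqref{dual_sdp} and both sets of constraints depend on these vectors \emph{only} through pairwise inner products (and hence their Gram matrix), any map that preserves inner products will automatically preserve feasibility and objective value. So the whole task reduces to writing down an inner-product-preserving linear map from each of these two spaces into $\mathbb{R}^d$ equipped with the standard Euclidean inner product.

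For $\mathcal{F}(E)$, observe first that the space is finite-dimensional: every $f \in \mathcal{F}(E)$ is determined by the coefficients $\{\alpha_{ej}\}_{e \in E, \, j \in \{1,\dots,n\}}$. I would define $\Phi : \mathcal{F}(E) \to \mathbb{R}^{E \times \{1,\dots,n\}}$ by
\[
\Phi(f)(e,j) \;:=\; \alpha_{ej} \sqrt{\delta_j^{(e)} - \delta_{j-1}^{(e)}}.
\]
Then a direct evaluation of \eqref{eq_first_innerprod}, together with the observation that the indicator intervals in the definition of $\mathcal{F}(E)$ overlap only on a measure-zero set, gives
\[
\langle f, g \rangle \;=\; \sum_{e \in E} \sum_{j=1}^{n} \alpha_{ej}\,\beta_{ej}\bigl(\delta_j^{(e)} - \delta_{j-1}^{(e)}\bigr) \;=\; \langle \Phi(f), \Phi(g)\rangle_{\mathrm{std}}.
\]
Coordinates for which $\delta_j^{(e)} = \delta_{j-1}^{(e)}$ simply vanish under $\Phi$ and can be dropped, so one lands in $\mathbb{R}^d$ for some $d \leq |E|\cdot n$.

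For $\mathcal{G}(E,M)$, since $M$ is positive definite, I would invoke a Cholesky factorization $M = L^T L$ with $L \in \mathbb{R}^{K \times K}$ invertible, and define $\Psi : \mathcal{G}(E,M) \to \mathbb{R}^{E \times \{1,\dots,K\}}$ by $\Psi(f)(e,\cdot) := L f(e,\cdot)$. Then using \eqref{eq_second_innerprod},
\[
\langle f, g \rangle \;=\; \sum_{e \in E} f(e,\cdot)^T L^T L\, g(e,\cdot) \;=\; \sum_{e \in E} \langle L f(e,\cdot),\, L g(e,\cdot)\rangle_{\mathrm{std}} \;=\; \langle \Psi(f),\Psi(g)\rangle_{\mathrm{std}},
\]
so $\Psi$ preserves every pairwise inner product and every norm.

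I do not anticipate a real obstacle: both $\mathcal{F}(E)$ and $\mathcal{G}(E,M)$ are finite-dimensional real inner product spaces, and the lemma is essentially the standard fact that any such space is isometrically isomorphic to $\mathbb{R}^d$ with its standard inner product (alternatively derivable by applying Gram--Schmidt to any basis). Applying $\Phi$ (resp.\ $\Psi$) to the vectors $v_0$ and $\{v_{ij}\}$ of a given dual fitting yields Euclidean vectors whose Gram matrix coincides exactly with the original one; hence every constraint of (SDP-C) and the objective value are preserved verbatim. The only mild bookkeeping is the observation about collapsed coordinates in $\Phi$, which is harmless since zero coordinates contribute nothing to inner products.
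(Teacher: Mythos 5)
Your proposal is correct and follows essentially the same approach as the paper: for $\mathcal{F}(E)$ you map each step function to its coefficient vector scaled by $\sqrt{\Delta_j^{(e)}}$ (the paper writes the coefficient as $f(e,\delta_{j-1}^{(e)}+\Delta_j^{(e)}/2)$, which equals your $\alpha_{ej}$), and for $\mathcal{G}(E,M)$ you factor $M$ into a sum of rank-one terms, with your Cholesky $M=L^TL$ being a particular instance of the paper's generic rank-one decomposition $M=\sum_j u_ju_j^T$.
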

\begin{proof}
For both spaces, we argue that a collection of elements in it can be mapped to a collection of vectors in a standard Euclidean space while preserving all the pairwise inner products (and thus also preserving the norms). This then clearly implies the claim. 

We first show the statement for $\mathcal{F}(E)$. Let us denote the difference between two breakpoints as $\Delta_j^{(e)} := \delta_j^{(e)} - \delta_{j-1}^{(e)}$. For each element $f \in \mathcal{F}(E)$, define  $f' \in \mathbb{R}^{E \times [n]}$ as $f'(e,j) :=f\Big(e, \delta_{j-1}^{(e)} + \Delta_j^{(e)}/2\Big)  \sqrt{\Delta_j^{(e)}} $. By computing the integral of a step function, we clearly have that for $f, g \in \mathcal{F}(E)$,
\[\langle f, g \rangle = \sum_{e \in E} \sum_{j = 1}^{n} \Delta_j^{(e)} f\Big(e, \delta_{j-1}^{(e)} + \Delta_j^{(e)}/2\Big)  \; g\Big(e, \delta_{j-1}^{(e)} + \Delta_j^{(e)}/2\Big) = \langle f', g' \rangle.\]
Note that the last inner product is the standard Euclidean one, thus showing the claim for $\mathcal{F}(E)$.

We now show the claim for $\mathcal{G}(E,M)$. Let us write a rank one decomposition of the positive definite matrix $M = \sum_{j=1}^K u_j u_j^T$, which can for instance be done through the spectral decomposition. For each $f \in \mathcal{G}(E,M)$, we define a modified $f' \in \mathbb{R}^{E \times [K]}$ as $f'(e,j) := f (e, \cdot)^T u_j$. Clearly, we then have that,  for $f,g \in \mathcal{G}(E,M)$:
\[\langle f, g \rangle = \sum_{e \in E} \sum_{j = 1}^K f(e, \cdot)^T u_j \: u_j^T g(e, \cdot) = \sum_{e \in E} \sum_{j = 1}^K f'(e,j) g'(e,j) = \langle f', g' \rangle. \qedhere\]
\end{proof}

\section{Congestion games with coordination mechanisms}
\label{sec_cong_coor}

\subsection{Smith's Rule}
The first coordination mechanism we consider is Smith's Rule. If $x$ is a mixed assignment, each player first independently picks a strategy according to his/her distribution specified by $x$ to get an assignment. Once an assignment is set, each resource orders the players using it by increasing Smith ratios and processes them in that order. The cost incurred by a player $j$ on a resource $e$ that he/she is using is then $p_{ej} + \sum_{k \prec_e j} p_{ek} \: z_{ek}$. The total cost incurred by a player is the sum of the costs incurred on all the resources used. More formally, the completion time of player $j \in N$ under Smith's Rule is defined to be:
\begin{equation}
\label{eq_SR_formula}
C_j(x) = \sum_{i \in S_j} x_{ij} \sum_{e \in i}  \Big( p_{ej} +  \sum_{k \prec_e j} p_{ek} \: z_{ek} \Big).
\end{equation}
The outer sum only has one term for a binary assignment. For a mixed assignment $x$, the expression above is the expected cost, by the law of total probability and independence. The social cost is the sum of weighted completion times:
\begin{align}
\label{eq_soccost_selfrout}
C(x) := \sum_{j \in N} w_j \: C_j(x) = \sum_{e \in E} \sum_{j \in N} w_j \: p_{ej} \: z_{ej} + \sum_{e \in E} \sum_{j \in N, k \prec_e j} w_j \: p_{ek} \: z_{ek} \: z_{ej},
\end{align}
where the second equality follows by using the definition of $C_j(x)$ and changing the order of summation. Moreover, if $x$ is a Nash equilibrium, the following inequalities are satisfied:
\begin{equation}
\label{eq_Nash_selfrout_SR}
C_j(x) \leq \sum_{e \in i} \Big( p_{ej} + \sum_{k \prec_e j} p_{ek} z_{ek} \Big) \qquad \forall j \in N, \: \forall i \in \mathcal{S}_j.
\end{equation}
The dual semidefinite relaxation \eqref{dual_sdp} then becomes the following, we call it \emph{(SDP-SR)}. The computation of the cost matrix $C$ in this setting is shown in Appendix \ref{sec_comp_dual}.

\begin{align}
\label{sdp_sr}
    \max \sum_{j \in N} y_j  - \frac{1}{2}  &\Vert v_0 \Vert ^ 2 \\
    y_j &\leq \sum_{e \in i} w_j \: p_{ej} - \frac{1}{2}\Vert v_{ij} \Vert ^ 2 + \langle v_0, v_{ij} \rangle \qquad \qquad \forall j \in N, \forall i \in \mathcal{S}_j \label{sdp_sr_c1}\\
    \langle v_{ij}, v_{i'k} \rangle &\leq \sum_{e \in i \cap i'} w_{j} \: w_{k} \: \min{\{\delta_{ej}, \delta_{ek}\}}\hspace{2.2cm} \forall (i,j) \neq (i',k) \text{ with } j,k \in N. \label{sdp_sr_c2}
\end{align}
We note that, in order to bound the coordination ratio of a coordination mechanism, one needs to construct a feasible dual solution to this relaxation, since it gives a valid lower bound on the optimal solution. Indeed, once an assignment is fixed, the optimal ordering on every resource is to schedule the players according to Smith's Rule \cite{smith1956various}.

\begin{theorem}
\label{thm_Smith_Rule}
For any Nash equilibrium $x$  of the above congestion game, where each resource uses the Smith's Rule policy, there exists a feasible (SDP-SR) solution with value at least $1/4 \: C(x)$. This implies that the price of anarchy and the coordination ratio is at most $4$.
\end{theorem}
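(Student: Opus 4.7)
The plan is to construct an explicit feasible solution to (SDP-SR) whose objective value is at least $\tfrac{1}{4} C(x)$; by weak duality this gives the $4$-bound for both the coordination ratio (since (SDP-SR) relaxes the problem of minimizing social cost under Smith's Rule) and the price of anarchy. I would work in the inner product space $\mathcal{F}(E)$ introduced in Section~\ref{section_SDP}, taking the breakpoints on each resource $e \in E$ to be the Smith ratios $\{\delta_{ek}\}_{k \in N}$ of the players on $e$; Lemma~\ref{lemma_inner_product_space} allows this without loss of generality.

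Following the intuition of Section~\ref{section_SDP}, I would pick the natural ``indicator'' fitting
\[
v_{ij}(e,t) \;:=\; w_j \cdot \mathds{1}_{\{e \in i\}} \cdot \mathds{1}_{\{t \leq \delta_{ej}\}}, \qquad v_0(e,t) \;:=\; -\tfrac{1}{2} \sum_{k \in N} w_k \, z_{ek} \cdot \mathds{1}_{\{t \leq \delta_{ek}\}},
\]
together with $y_j := \tfrac{1}{2} w_j C_j(x)$. A direct integration gives $\|v_{ij}\|^2 = \sum_{e \in i} w_j p_{ej}$ and $\langle v_{ij}, v_{i'k} \rangle = \sum_{e \in i \cap i'} w_j w_k \min\{\delta_{ej}, \delta_{ek}\}$, so constraint \eqref{sdp_sr_c2} is saturated with equality. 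This is the most natural choice that makes the quadratic term $\tfrac{1}{2}\|v_{ij}\|^2$ line up with half the linear part $\sum_{e \in i} w_j p_{ej}$ of the Nash inequalities.

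The first set of constraints \eqref{sdp_sr_c1} is verified by combining the above with
\[
-\langle v_0, v_{ij} \rangle \;=\; \tfrac{1}{2} w_j \sum_{e \in i} \sum_{k \in N} w_k z_{ek} \min\{\delta_{ej}, \delta_{ek}\} \;\geq\; \tfrac{1}{2} w_j \sum_{e \in i} \sum_{k \prec_e j} p_{ek} z_{ek},
\]
which follows by restricting the inner sum to indices $k \prec_e j$ and using $w_k \delta_{ek} = p_{ek}$. Plugging these into the right-hand side of \eqref{sdp_sr_c1} and applying the Nash inequality \eqref{eq_Nash_selfrout_SR} then gives
\[
\sum_{e \in i} w_j p_{ej} - \tfrac{1}{2}\|v_{ij}\|^2 - \langle v_0, v_{ij}\rangle \;\geq\; \tfrac{1}{2} w_j \sum_{e \in i} \Big(p_{ej} + \sum_{k \prec_e j} p_{ek} z_{ek}\Big) \;\geq\; \tfrac{1}{2} w_j C_j(x) \;=\; y_j.
\]

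The main calculation, and the only slightly delicate step, is bounding $\|v_0\|^2$ in terms of the social cost. Expanding the square gives
\[
\|v_0\|^2 \;=\; \tfrac{1}{4} \sum_{e \in E} \sum_{k,l \in N} w_k w_l \, z_{ek} z_{el} \min\{\delta_{ek}, \delta_{el}\}.
\]
Splitting into the diagonal terms $\tfrac{1}{4} \sum_{e,k} w_k p_{ek} z_{ek}^2$ and the off-diagonal terms, which by symmetry and the case analysis $k \prec_e l$ collapse to $\tfrac{1}{2} \sum_{e,j} \sum_{k \prec_e j} w_j p_{ek} z_{ek} z_{ej}$, and applying the key inequality $z_{ek}^2 \leq z_{ek}$ (valid since $z_{ek} \in [0,1]$) to the diagonal, a direct comparison with the expansion \eqref{eq_soccost_selfrout} of $C(x)$ gives $\|v_0\|^2 \leq \tfrac{1}{2} C(x)$. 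The dual objective is then $\sum_j y_j - \tfrac{1}{2}\|v_0\|^2 \geq \tfrac{1}{2} C(x) - \tfrac{1}{4} C(x) = \tfrac{1}{4} C(x)$, as required. The use of $z_{ek}^2 \leq z_{ek}$ is precisely where the SDP relaxation's constraint $X_{\{ij,ij\}} = X_{\{0,ij\}}$ is being exploited.
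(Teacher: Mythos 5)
Your proof is correct and follows essentially the same approach as the paper: it uses the identical dual fitting in $\mathcal{F}(E)$ (the paper writes the same $v_0, v_{ij}, y_j$ with $\beta = 1/2$ kept as a symbol), the same bound $\|v_0\|^2 \le \tfrac{1}{2} C(x)$ via $z_{ek}^2 \le z_{ek}$, and the same invocation of the Nash inequalities to verify the first set of SDP constraints, only presented as a direct lower bound rather than the paper's chain of equivalences.
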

\begin{remark}
This bound is tight, with a matching lower bound given in \cite{correa2012efficiency} even for scheduling on restricted identical machines with unit processing times. 
\end{remark}
\begin{proof}
We assume that the SDP vectors live in the inner product space $\mathcal{F}(E)$. By Lemma \ref{lemma_inner_product_space}, this is without loss of generality.
Let us fix $\beta = 1/2$, we now state the dual fitting for \emph{(SDP-SR)}:
\begin{itemize}
\item $v_0(e,t) := \beta \sum_{k \in N} w_k \: z_{ek} \: \mathds{1}_{\{t \leq \delta_{ek}\}}$
\item $v_{ij}(e,t) := w_j \: \mathds{1}_{\{ e \in i\}} \: \mathds{1}_{\{ t \leq \delta_{ej}\}}  \hspace{2cm} \forall j \in N, \forall i \in \mathcal{S}_j$
\item $y_j := \beta \: w_j \: C_j(x) \hspace{3.9cm} \forall j \in N.$
\end{itemize}
Let us now compute the different inner products and norms that we need. For a job $j \in N$ and a strategy $i \in \mathcal{S}_j$, we have
\begin{align*}
\Vert v_{ij} \Vert ^ 2 = \sum_{e \in i} w_j ^ 2 \: \delta_{e j} = \sum_{e \in i} w_j \: p_{e j}.
\end{align*}
For $v_0$, we give an upper bound with respect to $C(x)$:
\begin{align}
\frac{1}{\beta^2}\: \Vert v_0 \Vert ^ 2 &= \sum_{e \in E} \sum_{j,k \in N} w_j \: w_k \: z_{ej} \: z_{ek} \int_0^{\infty} \mathds{1}_{\{t \leq \delta_{ej}\}} \mathds{1}_{\{t \leq \delta_{ek}\}} \: dt = \sum_{e \in E} \sum_{j,k \in N} w_j w_k z_{ej} z_{ek} \min\{\delta_{ej}, \delta_{ek}\} \nonumber\\
&= \sum_{e \in E} \sum_{j \in N} w_j ^ 2 \: z_{ej}^2 \: \delta_{ej} + 2 \sum_{e \in E} \sum_{j \in N, k \prec_e j}w_j w_k z_{ej} z_{ek} \delta_{ek} \nonumber\\
&=  \sum_{e\in E} \sum_{j \in N} w_j \: p_{ej} \: z_{ej}^2  + 2 \sum_{e \in E} \sum_{j \in N, k \prec_e j}w_j \: p_{ek} \: z_{ej} \: z_{ek} \nonumber\\
&\leq 2 \: C(x).  \label{eq_v0}
\end{align}
The last equality uses the definition of the Smith Ratio $\delta_{ej} = p_{ej}/w_j$, whereas the last inequality follows from the fact that $z_{ej}^2 \leq z_{ej}$ (since $z_{ej} \in [0,1]$) as well as the definition of the social cost \eqref{eq_soccost_selfrout}. In addition, for any $(i,j) \neq (i',k)$ with $j,k \in N$, we have
\begin{equation}
\label{eq_inner_prod_jk}
\langle v_{ij} ,v_{i'k} \rangle = \sum_{e \in E} \: w_j \: w_k \: \mathds{1}_{\{ e \in i\}} \: \mathds{1}_{\{ e \in i'\}}\: \int_0^{\infty} \mathds{1}_{\{t \leq \delta_{ej}\}} \mathds{1}_{\{t \leq \delta_{ek}\}} \: dt = \sum_{e \in i \cap i'} w_{j} \: w_{k} \: \min{\{\delta_{ej}, \delta_{ek}\}}
\end{equation}
and observe that this tightly satisfies the second set of SDP constraints \eqref{sdp_sr_c2}. Finally,
\[\langle v_0 ,v_{ij} \rangle = \beta \sum_{e \in i} \sum_{k \in N} w_j \: w_k \: z_{ek} \:  \min\{\delta_{ej}, \delta_{ek}\}.\]
Let us now check that this is a feasible solution to \emph{(SDP-SR)}. The second set of constraints is satisfied due to \eqref{eq_inner_prod_jk}. The first set of constraints \eqref{sdp_sr_c1} under the above fitting becomes:
\begin{align*}
y_j &\leq \sum_{e \in i} w_j \: p_{ej} - \frac{1}{2}\Vert v_{ij} \Vert ^ 2 + \langle v_0, v_{ij} \rangle \\
&\iff \beta \: w_j \: C_j(x) \leq \frac{1}{2} \sum_{e \in i} w_j \: p_{ej} + \beta \sum_{e \in i} \sum_{k \in N} w_j \: w_k \: z_{ek} \:  \min\{\delta_{ej}, \delta_{ek}\} \\
&\iff C_j(x) \leq \sum_{e \in i} \Big(p_{ej} +  \sum_{k \in N} w_k \: z_{ek} \: \min\{\delta_{ej}, \delta_{ek}\} \Big)\\
&\iff C_j(x) \leq \sum_{e \in i} \Big( p_{ej} +  \sum_{k \prec_e j} p_{ek} \: z_{ek} +  \sum_{k \succeq_e j} w_k \: z_{ek} \: \delta_{ej} \Big).
\end{align*}
We have simplified both sides by $\beta \: w_j = 1/2 \: w_j $ in the third line, which holds by definition of $\beta := 1/2$. We have also used the definition of the Smith ratio $\delta_{ek} = p_{ek}/w_k$ in the last line. This set of constraints is now clearly satisfied by the Nash conditions \eqref{eq_Nash_selfrout_SR}. The objective function of this SDP can now be lower bounded using \eqref{eq_v0}:
\begin{align*}
\sum_{j\in N} y_j  - \frac{1}{2} &\Vert v_0 \Vert ^ 2 \geq \beta \sum_{j\in N} w_j \: C_j(x) - \beta^2 C(x) = \left(\beta - \beta^2 \right) C(x) = \frac{1}{4} \: C(x). \qedhere
\end{align*}
\end{proof}

\subsection{The Proportional Sharing policy}
In this section, we consider a preemptive policy for every resource named \emph{Proportional Sharing}. Once an assignment is fixed, each resource splits its processing capacity among the uncompleted jobs proportionally to their weights. Given an assignment $x$, the completion time of player $j \in N$ is defined to be:
\[C_j(x) = \sum_{i \in S_j} x_{ij} \sum_{e \in i}  \Big( p_{ej} +  \sum_{k \prec_e j} p_{ek} \: z_{ek} + \sum_{k \succ_e j} w_k \: z_{ek} \: \delta_{ej} \Big). \]
For this policy, it is in fact more intuitive to understand the definition by looking at the weighted completion time:
\[w_j C_j(x) = \sum_{i \in S_j} x_{ij} \sum_{e \in i}  \Big( w_j p_{ej} +  \sum_{k \in N \setminus{\{j\}}} w_j w_k \min \{\delta_{ej}, \delta_{ek} \} \: z_{ek} \Big).\]
The social cost is the sum of weighted completion times:
\begin{align}
\label{eq_soccost_selfrout_ps}
C(x) := \sum_{j \in N} w_j \: C_j(x) = \sum_{e \in E} \sum_{j \in N} w_j \: p_{ej} \: z_{ej} + 2 \sum_{e \in E} \sum_{j \in N, k \prec_e j} w_j \: p_{ek} \: z_{ek} \: z_{ej}.
\end{align}
Observe that there is now a factor 2 in front of the second term if one compares it to the \emph{Smith Rule} policy. Moreover, if $x$ is a Nash equilibrium, the following inequalities are satisfied:
\begin{equation}
\label{eq_Nash_selfrout}
C_j(x) \leq \sum_{e \in i} \Big( p_{ej} + \sum_{k \prec_e j} p_{ek} z_{ek} + \sum_{k \succ_e j} w_k \: z_{ek} \: \delta_{ej}\Big) \qquad \forall j \in N, \: \forall i \in \mathcal{S}_j.
\end{equation}
We first need a small lemma about two parameters that will play a key role in the dual fitting. The first property will ensure feasibility of the solution, whereas the second one will be the constant in front of the objective function.
\begin{lemma}
\label{lemma_alpha_beta_cong}
Let $\alpha, \beta \geq 0$ be defined as $\alpha^2 := 2/\sqrt{5}$ and $\beta := 1/\alpha - \alpha/2$. The following two properties hold:
\begin{itemize}
\item $1 - \alpha^2/2 = \alpha \beta$
\item $\alpha \beta - \beta^2/2 = 2/(3+\sqrt{5})$
\end{itemize}
\end{lemma}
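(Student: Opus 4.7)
The plan is to verify both identities by direct algebraic substitution using the definitions $\alpha^2 = 2/\sqrt{5}$ and $\beta = 1/\alpha - \alpha/2$. There is no conceptual obstacle here; the lemma is a purely computational fact, and its role is to package two constants that will later arise from (i) the tight use of the second SDP constraint under the chosen vector fitting, and (ii) the maximization of the resulting quadratic dual objective. So the plan is simply to carry out the arithmetic cleanly.

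For the first identity, multiply out the definition of $\beta$:
\[
\alpha\beta \;=\; \alpha\Bigl(\frac{1}{\alpha} - \frac{\alpha}{2}\Bigr) \;=\; 1 - \frac{\alpha^2}{2},
\]
which is exactly the claimed relation. This will serve to make the Nash inequality \eqref{eq_Nash_selfrout} imply the first SDP constraint after dividing both sides by the appropriate factor.

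For the second identity, first expand
\[
\beta^2 \;=\; \Bigl(\frac{1}{\alpha} - \frac{\alpha}{2}\Bigr)^{\!2} \;=\; \frac{1}{\alpha^2} - 1 + \frac{\alpha^2}{4}.
\]
Substituting $\alpha^2 = 2/\sqrt{5}$ gives $\beta^2 = \sqrt{5}/2 + 1/(2\sqrt{5}) - 1 = 3/\sqrt{5} - 1$. Combining with $\alpha\beta = 1 - \alpha^2/2 = 1 - 1/\sqrt{5}$, I compute
\[
\alpha\beta - \frac{\beta^2}{2} \;=\; \Bigl(1 - \frac{1}{\sqrt{5}}\Bigr) - \Bigl(\frac{3}{2\sqrt{5}} - \frac{1}{2}\Bigr) \;=\; \frac{3}{2} - \frac{5}{2\sqrt{5}} \;=\; \frac{3 - \sqrt{5}}{2}.
\]
The last step is to rationalize: multiply numerator and denominator by $3 + \sqrt{5}$ to get $(3 - \sqrt{5})/2 = 2/(3+\sqrt{5})$, which is the claimed value.

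The only thing worth flagging is making sure $\beta$ is well-defined and nonnegative (so that the dual fitting later is valid): since $\alpha^2 = 2/\sqrt{5} < 2$, we have $\alpha < \sqrt{2}$, hence $1/\alpha > \alpha/2$, so $\beta > 0$. This small check guarantees that the pair $(\alpha,\beta)$ is admissible in the optimization over the dual SDP parameters foreshadowed in the load-balancing example.
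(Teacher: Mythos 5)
Your proof is correct and follows essentially the same direct algebraic computation as the paper's proof (the paper keeps things in terms of $\alpha^2$ slightly longer before substituting, but the route is identical). The added check that $\beta > 0$ is a reasonable sanity check, though not strictly required by the lemma statement.
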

\begin{proof}
The first property is immediate by definition of $\beta$. For the second property, we get
\begin{align*}
\alpha \beta - \frac{\beta^2}{2} = 1 - \frac{\alpha^2}{2} - \frac{1}{2}\left(\frac{1}{\alpha} - \frac{\alpha}{2}\right)^2 = \frac{3}{2} - \frac{5\alpha^2}{8} - \frac{1}{2\alpha^2} = \frac{3}{2} - \frac{5}{4 \sqrt{5}} - \frac{\sqrt{5}}{4} = \frac{2}{3 + \sqrt{5}}.
\end{align*}
\qedhere
\end{proof}
\begin{theorem}
For any Nash equilibrium $x$  of the above congestion game, where each resource uses the Proportional Sharing policy, there exists a feasible (SDP-SR) solution with value at least $2/(3 + \sqrt{5}) \: C(x)$, implying that the coordination ratio is at most $(3+\sqrt{5})/2$.
\end{theorem}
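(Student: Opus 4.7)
My plan is to adapt the dual fitting of Theorem~\ref{thm_Smith_Rule} for Smith's Rule by rescaling the vectors $v_{ij}$ by the parameter $\alpha$ of Lemma~\ref{lemma_alpha_beta_cong} and replacing the previous $\beta = 1/2$ by $\beta = 1/\alpha - \alpha/2$. Working in the inner product space $\mathcal{F}(E)$ (which is legitimate by Lemma~\ref{lemma_inner_product_space}), I would set
\begin{align*}
v_0(e,t) &:= -\beta \sum_{k \in N} w_k \, z_{ek} \, \mathds{1}_{\{t \, \leq \, \delta_{ek}\}}, \\
v_{ij}(e,t) &:= \alpha \, w_j \, \mathds{1}_{\{e \, \in \, i\}} \, \mathds{1}_{\{t \, \leq \, \delta_{ej}\}} \qquad \forall j \in N, \, \forall i \in \mathcal{S}_j, \\
y_j &:= \alpha\beta \, w_j \, C_j(x) \qquad \forall j \in N.
\end{align*}

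The second set of (SDP-SR) constraints~\eqref{sdp_sr_c2} is immediate, since as in the Smith's Rule computation~\eqref{eq_inner_prod_jk} one gets $\langle v_{ij}, v_{i'k} \rangle = \alpha^2 \sum_{e \in i \cap i'} w_j w_k \min\{\delta_{ej}, \delta_{ek}\}$ and $\alpha^2 = 2/\sqrt{5} \leq 2$. For the first set~\eqref{sdp_sr_c1}, one computes $\tfrac{1}{2}\|v_{ij}\|^2 = \tfrac{\alpha^2}{2}\sum_{e \in i} w_j p_{ej}$ and $\langle v_0, v_{ij}\rangle = -\alpha\beta \sum_{e \in i} \sum_{k \in N} w_j w_k z_{ek} \min\{\delta_{ej}, \delta_{ek}\}$; after substituting, dividing by $\alpha\beta w_j$ and using the identity $1 - \alpha^2/2 = \alpha\beta$ from Lemma~\ref{lemma_alpha_beta_cong}, the constraint reduces to
\[
C_j(x) \leq \sum_{e \in i} \Big( p_{ej} + \sum_{k \prec_e j} p_{ek} z_{ek} + \sum_{k \succeq_e j} w_k z_{ek} \delta_{ej} \Big),
\]
which is implied by the Proportional Sharing Nash inequality~\eqref{eq_Nash_selfrout} (the two right-hand sides differ only by the nonnegative $k = j$ term in the last sum).

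The only real deviation from the Smith's Rule proof is the bound on $\|v_0\|^2$. Expanding as in~\eqref{eq_v0}, one obtains
\[
\|v_0\|^2 = \beta^2 \Big( \sum_{e \in E} \sum_{j \in N} w_j \, p_{ej} \, z_{ej}^2 \; + \; 2 \sum_{e \in E} \sum_{\substack{j \in N \\ k \prec_e j}} w_j \, p_{ek} \, z_{ej} \, z_{ek} \Big) \; \leq \; \beta^2 \, C(x),
\]
where the inequality uses $z_{ej}^2 \leq z_{ej}$ combined with the Proportional Sharing expression~\eqref{eq_soccost_selfrout_ps} for $C(x)$, which \emph{already} carries the factor $2$ in front of the interaction term. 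This is exactly the source of the improvement over Smith's Rule: the extra factor of $2$ in $C(x)$ tightens the bound on $\|v_0\|^2$ by the same factor, so that $\alpha$ can be chosen strictly larger than in the Smith's Rule analysis. Putting the pieces together, the dual objective is at least $\alpha\beta \, C(x) - \tfrac{1}{2}\beta^2 \, C(x) = \tfrac{2}{3+\sqrt{5}} \, C(x)$ by the second identity of Lemma~\ref{lemma_alpha_beta_cong}, and weak duality yields the coordination ratio bound of $(3 + \sqrt{5})/2$. The main point to take care of is the bookkeeping that connects the factor of $2$ in~\eqref{eq_soccost_selfrout_ps} to the sharper $\|v_0\|^2$ estimate; once this is in place, the optimal $(\alpha, \beta)$ of Lemma~\ref{lemma_alpha_beta_cong} delivers the bound automatically.
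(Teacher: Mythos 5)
Your proposal is correct and is essentially identical to the paper's proof: same choice of $\alpha$, $\beta$ from Lemma~\ref{lemma_alpha_beta_cong}, same vector fitting, same identification of the factor-$2$ improvement in $\|v_0\|^2$ stemming from the Proportional Sharing expression~\eqref{eq_soccost_selfrout_ps}, and same verification of the two constraint families via the first identity of the lemma and the Nash conditions~\eqref{eq_Nash_selfrout}. The one small point you spell out more explicitly than the paper — that the derived constraint has $\sum_{k \succeq_e j}$ while the Nash inequality has $\sum_{k \succ_e j}$, with the extra $k=j$ term nonnegative — is a minor but welcome clarification of a step the paper leaves implicit.
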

\begin{remark}
This bound is tight, with a matching lower bound given in \cite{caragiannis2006tight} even for the price of anarchy of the game.
\end{remark}
\begin{proof}
The proof is very similar to the one of Theorem \ref{thm_Smith_Rule}, but with the modified constants $\alpha^2 := 2/\sqrt{5}$ and $\beta := 1/\alpha - \alpha/2$ stated in Lemma \ref{lemma_alpha_beta_cong}. We now state the dual fitting.
\begin{itemize}
\item $v_0(e,t) := \beta \sum_{k \in N} w_k \: z_{ek} \: \mathds{1}_{\{t \leq \delta_{ek}\}}$
\item $v_{ij}(e,t) := \alpha \; w_j \: \mathds{1}_{\{ e \in i\}} \: \mathds{1}_{\{ t \leq \delta_{ej}\}}  \hspace{2cm} \forall j \in N, \forall i \in \mathcal{S}_j$
\item $y_j := \alpha \beta \: w_j \: C_j(x) \hspace{3.95cm} \forall j \in N.$
\end{itemize}
Using the same computations as in Theorem \ref{thm_Smith_Rule}, we compute the different inner products and norms that we need.
\begin{itemize}
\item $\Vert v_0 \Vert ^ 2 = \beta^2 \left(\sum_{e\in E} \sum_{j \in N} w_j \: p_{ej} \: z_{ej}^2  + 2 \sum_{e \in E} \sum_{j \in N, k \prec_e j}w_j \: p_{ek} \: z_{ej} \: z_{ek}\right) \leq \beta^2 C(x)$
\item $\Vert v_{ij} \Vert ^ 2 = \alpha^2 \sum_{e \in i} w_j \: p_{e j}$
\item $\langle v_0, v_{ij} \rangle = \alpha \beta \sum_{e \in i} \sum_{k \in N} w_j \: w_k \: z_{ek} \:  \min\{\delta_{ej}, \delta_{ek}\}$
\item $\langle v_{ij}, v_{i'k} \rangle = \alpha^2 \; \sum_{e \in i \cap i'} w_{j} \: w_{k} \: \min{\{\delta_{ej}, \delta_{ek}\}}$
\end{itemize}
The main difference with \emph{Smith's Rule} which allows us to get an improved bound is the fact that the upper bound on the squared norm of $v_0$ is a factor $2$ stronger in this case (see {\eqref{eq_v0}}), due to the new definition of the social cost $C(x)$ given in \eqref{eq_soccost_selfrout_ps}. To see that this solution is feasible, note that the second set of SDP constraints \eqref{sdp_sr_c2} is satisfied due to the last computation above and the fact that $\alpha^2 \leq 1$. The first set of constraints \eqref{sdp_sr_c1} under the above fitting reads:
\begin{align*}
y_j &\leq \sum_{e \in i} w_j \: p_{ej} - \frac{1}{2}\Vert v_{ij} \Vert ^ 2 + \langle v_0, v_{ij} \rangle \\
&\iff \alpha \beta \: w_j \: C_j(x) \leq \left(1 - \frac{\alpha^2}{2}\right) \sum_{e \in i} w_j \: p_{ej} + \alpha \beta \sum_{e \in i} \sum_{k \in N} w_j \: w_k \: z_{ek} \:  \min\{\delta_{ej}, \delta_{ek}\} \\
&\iff C_j(x) \leq \sum_{e \in i} \Big(p_{ej} +  \sum_{k \in N} w_k \: z_{ek} \: \min\{\delta_{ej}, \delta_{ek}\} \Big)\\
&\iff C_j(x) \leq \sum_{e \in i} \Big( p_{ej} +  \sum_{k \prec_e j} p_{ek} \: z_{ek} +  \sum_{k \succeq_e j} w_k \: z_{ek} \: \delta_{ej} \Big).
\end{align*}
The third equivalence follows from the first property of Lemma \ref{lemma_alpha_beta_cong}. We see that this is satisfied due to the Nash conditions \eqref{eq_Nash_selfrout}.
The objective value of the solution can now be lower bounded as follows:
\[\sum_{j \in N} y_j  - \frac{1}{2}  \Vert v_0 \Vert ^ 2 \geq  \alpha \beta \sum_{j \in N}w_j C_j(x) - \frac{\beta^2}{2} \: C(x) = \left(\alpha \beta - \frac{\beta^2}{2}\right) C(x) = \frac{2}{3 + \sqrt{5}} \: C(x)\]
where the last equality follows by the second property of Lemma \ref{lemma_alpha_beta_cong}.
\end{proof}

\subsection{The Rand policy}
In this section, we consider a randomized policy named \emph{Rand}. If $x$ is a mixed assignment, each player first independently picks a strategy according to his/her distribution specified by $x$. We denote by $N(e) \subseteq N$ the (possibly random) subset of players using resource $e \in E$. Each resource then orders the players using it randomly in a way ensuring that for any pair $j,k \in N(e)$, the probability that $j$ comes after $k$ in the ordering is exactly equal to 
$\delta_{ej}/(\delta_{ej} + \delta_{ek})$. Such a distribution can be achieved by sampling one player $j \in N(e)$ at random with probability $\delta_{ej}/\sum_{k \in N(e)}\delta_{ek}$, putting that player at the end of the ordering, and repeating this process. The expected completion time of every player is thus given by:
\[C_j(x) = \sum_{i \in S_j} x_{ij} \sum_{e \in i}  \Big( p_{ej} +  \sum_{k \neq j} \frac{\delta_{ej}}{\delta_{ej} + \delta_{ek}} \: p_{ek} \: z_{ek} \Big). \]
The social cost is the sum of weighted completion times:
\begin{align}
\label{eq_soccost_rand}
C(x) := \sum_{j \in N} w_j \: C_j(x) = \sum_{e \in E} \sum_{j \in N} w_j \: p_{ej} \: z_{ej} + \sum_{e \in E} \sum_{j \in N, k \neq j} \frac{\delta_{ej} \delta_{ek}}{\delta_{ej} + \delta_{ek}} \: w_j w_k \: z_{ej} z_{ek}.
\end{align}
Moreover, if $x$ is a Nash equilibrium, the following inequalities are satisfied: 
\begin{equation}
\label{eq_Nash_rand}
C_j(x) \leq \sum_{e \in i} \Big( p_{ej} + \sum_{k \neq j} \frac{\delta_{ej}}{\delta_{ej} + \delta_{ek}} \: p_{ek} \: z_{ek} \Big) \qquad \qquad \forall j \in N, i \in \mathcal{S}_j.
\end{equation}
We now state a small lemma about some constants that will be important for the fitting. The first property will ensure that our dual fitting is feasible, whereas the second property will be the constant in front of the objective value of our SDP solution, thus determining the coordination ratio.
\begin{lemma}
\label{lemma_alpha_beta_rand_2}
Let $\alpha, \beta \geq 0$ be defined as $\alpha := 1$ and $\beta := 3/4$. The following two properties hold:
\begin{itemize}
\item $1 - \alpha^2/4 = \alpha \beta$
\item $\alpha \beta - \beta^2/2 = 15/32$
\end{itemize}
\end{lemma}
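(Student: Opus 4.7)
The plan is a direct arithmetic verification: I would simply substitute the stated values $\alpha = 1$ and $\beta = 3/4$ into each of the two identities and check equality. For the first identity, I would compute $1 - \alpha^2/4 = 1 - 1/4 = 3/4$ on the left and $\alpha\beta = 1 \cdot 3/4 = 3/4$ on the right. For the second, I would expand $\alpha\beta - \beta^2/2 = 3/4 - (9/16)/2 = 24/32 - 9/32 = 15/32$, matching the claim.

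There is no genuine obstacle here; the lemma is a bookkeeping statement that isolates in advance the two algebraic identities needed to make the dual fitting for \emph{Rand} go through, exactly in the spirit of Lemma \ref{lemma_alpha_beta_cong} for \emph{Proportional Sharing}. What is worth flagging, rather than proving, is why the constants take these particular values. The first identity will be used to force the coefficient of $\sum_{e \in i} w_j p_{ej}$ in the first SDP constraint to coincide with the coefficient $\alpha\beta$ multiplying $w_j C_j(x)$ in the chosen $y_j$, so that the constraint reduces cleanly to the Nash inequality \eqref{eq_Nash_rand}. The second identity then gives the factor $15/32$ appearing in front of $C(x)$ in the lower bound on the SDP objective $\sum_j y_j - \tfrac{1}{2}\|v_0\|^2$, from which the coordination ratio $32/15$ follows.

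The only conceptual remark is the change of $\alpha^2/2$ (as in Lemma \ref{lemma_alpha_beta_cong}) to $\alpha^2/4$ here. This reflects the different quadratic structure of the social cost under \emph{Rand}: the cross term in \eqref{eq_soccost_rand} involves $\delta_{ej}\delta_{ek}/(\delta_{ej}+\delta_{ek})$ rather than $\min\{\delta_{ej},\delta_{ek}\}$, and by the elementary inequality $\delta_{ej}\delta_{ek}/(\delta_{ej}+\delta_{ek}) \geq \tfrac{1}{2}\min\{\delta_{ej},\delta_{ek}\}$ one expects an additional factor-$2$ saving in the $\|v_0\|^2$ bound, which in turn relaxes the feasibility constraint from $1 - \alpha^2/2 = \alpha\beta$ to $1 - \alpha^2/4 = \alpha\beta$. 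Once this is understood, optimizing $\alpha\beta - \beta^2/2$ subject to $1 - \alpha^2/4 = \alpha\beta$ over $\alpha \in [0,2]$, $\beta \geq 0$ is a one-variable calculus exercise whose optimum lies at $(\alpha,\beta) = (1, 3/4)$, which is precisely the choice fixed in the statement.
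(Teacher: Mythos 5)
Your proposal is correct and matches the paper, which simply declares the verification ``immediate''; your direct substitution is exactly that check. The surrounding discussion of why $\alpha^2/4$ replaces $\alpha^2/2$ and how the constants are optimized is accurate context but not part of the lemma's proof.
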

\begin{proof} The proof is immediate.
\end{proof}
\begin{theorem}
\label{thm_dual_fitting_rand}
For any instance of the above congestion game under the Rand policy, and for any Nash equilibrium $x$, there exists a feasible (SDP-SR) solution with value at least $15/32 \; C(x)$. This implies that the coordination ratio is at most $32/15 \approx 2.133$.
\end{theorem}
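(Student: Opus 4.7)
The plan is to mirror the dual-fitting templates used for Theorem~\ref{thm_Smith_Rule} and for Proportional Sharing, but with a vector fitting tailored to reproduce the harmonic-mean quantity $\delta_{ej}\delta_{ek}/(\delta_{ej}+\delta_{ek})$ appearing in both the Rand social cost \eqref{eq_soccost_rand} and the Nash inequalities \eqref{eq_Nash_rand}. The key identity driving the choice is
\[
\int_0^{\infty} e^{-t/\delta_{ej}}\,e^{-t/\delta_{ek}}\,dt \;=\; \frac{\delta_{ej}\,\delta_{ek}}{\delta_{ej}+\delta_{ek}},
\]
so exponentials (rather than the indicator step functions used for the earlier two policies) are the natural building block.

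With the constants $\alpha=1$ and $\beta=3/4$ from Lemma~\ref{lemma_alpha_beta_rand_2}, I would propose the fitting
\[
v_0(e,t) := -\beta\sum_{k\in N} w_k\,z_{ek}\,e^{-t/\delta_{ek}}, \qquad v_{ij}(e,t) := \alpha\,w_j\,\mathds{1}_{\{e\in i\}}\,e^{-t/\delta_{ej}}, \qquad y_j := \alpha\beta\,w_j\,C_j(x),
\]
working in the Hilbert space over $E\times[0,\infty)$ with inner product $\langle f,g\rangle = \sum_{e\in E}\int_0^\infty f(e,t)g(e,t)\,dt$. The routine computations then yield $\Vert v_{ij}\Vert^2 = (\alpha^2/2)\sum_{e\in i} w_j p_{ej}$; $\langle v_{ij},v_{i'k}\rangle = \alpha^2 w_j w_k\sum_{e\in i\cap i'}\delta_{ej}\delta_{ek}/(\delta_{ej}+\delta_{ek})$; and $\langle v_0,v_{ij}\rangle = -\alpha\beta w_j\sum_{e\in i}\sum_{k\in N} w_k z_{ek}\,\delta_{ej}\delta_{ek}/(\delta_{ej}+\delta_{ek})$. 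For $\Vert v_0\Vert^2$ I would split the double sum over $(j,k)$ into the diagonal and off-diagonal terms, using $z_{ej}^2\leq z_{ej}$ on the diagonal, and then recognize the bound $\beta^2 C(x)$ from \eqref{eq_soccost_rand}.

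Feasibility then follows cleanly. The second SDP constraint \eqref{sdp_sr_c2} reduces to $\alpha^2\,\delta_{ej}\delta_{ek}/(\delta_{ej}+\delta_{ek})\leq \min\{\delta_{ej},\delta_{ek}\}$, which is immediate from the harmonic-mean inequality together with $\alpha^2=1$. The first SDP constraint \eqref{sdp_sr_c1}, after dividing by $\alpha\beta w_j$ and using $1-\alpha^2/4=\alpha\beta$, becomes
\[
C_j(x) \;\leq\; \sum_{e\in i}\Big(p_{ej}+\sum_{k\in N} w_k\,z_{ek}\,\tfrac{\delta_{ej}\delta_{ek}}{\delta_{ej}+\delta_{ek}}\Big),
\]
which is the Nash inequality \eqref{eq_Nash_rand} augmented by the nonnegative $k=j$ term $p_{ej}z_{ej}/2$ on the right-hand side, hence implied by it. The objective value is then at least $\alpha\beta\sum_j w_j C_j(x) - (\beta^2/2)C(x) = (\alpha\beta-\beta^2/2)C(x) = (15/32)\,C(x)$ by the second property of Lemma~\ref{lemma_alpha_beta_rand_2}, giving the claimed coordination ratio of $32/15$ by weak duality.

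The main obstacle I anticipate is simply that exponential functions do not belong to the step-function space $\mathcal{F}(E)$ of \eqref{eq_first_innerprod}, so the statement of Lemma~\ref{lemma_inner_product_space} as written does not directly apply. This is a cosmetic rather than substantive issue: the vectors $\{v_0\}\cup\{v_{ij}\}$ form a finite collection in $L^2(E\times[0,\infty))$, and any such finite collection can be mapped to vectors in a standard Euclidean space preserving all pairwise inner products via a rank-one decomposition of the associated Gram matrix (exactly as in the proof of Lemma~\ref{lemma_inner_product_space} for $\mathcal{G}(E,M)$). I would either state this generalization as a one-line extension of that lemma, or equivalently define $M$ directly from the positive semidefinite Gram matrix with entries $\delta_{ej}\delta_{ek}/(\delta_{ej}+\delta_{ek})$ and instantiate the fitting inside the resulting $\mathcal{G}$-type space, so that no limiting or approximation argument is needed.
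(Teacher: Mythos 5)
Your proposal is correct, and it takes a genuinely different route than the paper at the one non-template step. The paper first rescales the Smith ratios so they are all integers bounded by some $K$, then works in the space $\mathcal{G}(E,M)$ with the kernel matrix $M_{r,s} = rs/(r+s)$ and the indicator fitting $v_{ij}(e,r) = \alpha\, w_j\, \mathds{1}_{\{e\in i\}}\,\mathds{1}_{\{\delta_{ej}=r\}}$, invoking a result of Cole et al.\ that $M$ is positive-definite in order to legitimize this as an inner product space (via Lemma~\ref{lemma_inner_product_space}). You instead place the vectors in $L^2(E\times[0,\infty))$ and use exponentials $e^{-t/\delta_{ej}}$, so the identity $\int_0^\infty e^{-t/a}e^{-t/b}\,dt = ab/(a+b)$ \emph{directly produces} the harmonic-mean kernel. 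This has two advantages: it removes the need to rescale the Smith ratios to integers, and it gives a constructive, self-contained proof that the relevant Gram matrix is PSD (the exponentials are explicit realizations), rather than relying on the positive-definiteness of $M$ as a cited black box --- indeed your integral identity is itself a one-line proof of that fact. You correctly identify the only technical gap, namely that exponentials do not lie in the step-function space $\mathcal{F}(E)$, and your fix (finitely many vectors in a Hilbert space have a finite-dimensional Euclidean realization via the Gram matrix) is standard and exactly parallels the $\mathcal{G}(E,M)$ half of Lemma~\ref{lemma_inner_product_space}. All the remaining steps --- the norm and inner-product computations, the bound $\Vert v_0\Vert^2 \leq \beta^2 C(x)$ via the diagonal/off-diagonal split and $z_{ej}^2\leq z_{ej}$, the reduction of \eqref{sdp_sr_c1} to the Nash condition \eqref{eq_Nash_rand} using $1-\alpha^2/4=\alpha\beta$ (noting the harmless extra $k=j$ term), the check that $rs/(r+s)\leq\min\{r,s\}$ with $\alpha^2=1$ handles \eqref{sdp_sr_c2}, and the objective bound via $\alpha\beta - \beta^2/2 = 15/32$ --- match the paper's argument exactly.
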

\begin{proof}
For simplicity of presentation, let us assume without loss of generality that the processing times are scaled such that the Smith ratios $\delta_{ej} = p_{ej}/w_j$ with $p_{ej} < \infty$ are all integral.  Moreover, let us take $K \in \mathbb{N}$ large enough such that $\delta_{ej} \leq K$ for all pairs $(e,j) \in E \times N$ such that $p_{ej} < \infty$. 
Consider the matrix $M \in \mathbb{R}^{K \times K}$ given by 
\[M_{r,s} := \frac{r \; s}{r + s} \qquad \forall r, s \in \{1, \dots, K\}.\]
A key insight shown in \cite{cole2011inner} is that this matrix is positive-definite. By Lemma \ref{lemma_inner_product_space}, we can thus assume that the SDP vectors live in the space $\mathcal{G}(E,M)$. Let $\alpha, \beta$ be defined as in Lemma \ref{lemma_alpha_beta_rand_2}, we now state the dual fitting:
\begin{itemize}
\item $v_0(e, r) := \beta \: \sum_{k \in N} w_k \; z_{ek} \; \mathds{1}_{\{\delta_{ek} = r\}}$
\item $v_{ij}(e, r) := \alpha \: w_j \; \mathds{1}_{\{e \in i\}} \; \mathds{1}_{\{\delta_{ej} = r\}} \qquad  \qquad \qquad \forall j \in N, i \in \mathcal{S}_j$
\item $y_j := \alpha \beta \: w_j \: C_j(x) \qquad \qquad \qquad \qquad \qquad \qquad \forall j \in N.$
\end{itemize}
Let us now compute the different inner products and norms that we need. For every $j \in N, i \in \mathcal{S}_j$:
\begin{align*}
\frac{1}{\alpha^2} \; \Vert v_{ij}\Vert ^ 2 = \sum_{e \in i} M_{\{\delta_{ej},\delta_{ej}\}} w_j^2 = \sum_{e \in i}\frac{\delta_{ej}}{2} \: w_{j}^2 = \frac{1}{2} \: \sum_{e \in i} w_j \: p_{ej}.
\end{align*}
For the squared norm of $v_0$, we give an upper bound with respect to $C(x)$:
\begin{align}
\frac{1}{\beta^2}\: \Vert v_0 \Vert ^ 2 &= \sum_{e \in E} \sum_{r,s = 1}^K M_{r,s} \: v_0(e,r) \: v_0(e,s) = \sum_{e \in E} \sum_{j,k \in N} w_j \: w_k \: z_{ej} \: z_{ek} \: M_{\{\delta_{ej},\delta_{ek}\}} \nonumber \\
&= \sum_{e \in E} \sum_{j,k \in N} \frac{\delta_{ej} \delta_{ek}}{\delta_{ej} + \delta_{ek}}w_j \: w_k \: z_{ej} \: z_{ek} \leq C(x)
\end{align}
where the last inequality holds by \eqref{eq_soccost_rand} and $z_{ej}^2 \leq z_{ej}$.
For any pair $(i,j) \neq (i',k)$ with $j,k \geq 1$:
\begin{align}
\label{eq_rand_sdp_cons}
\frac{1}{\alpha^2} \; \langle v_{ij}, v_{i'k} \rangle =  \sum_{e \in i \cap i'} M_{\{\delta_{ej},\delta_{ek}\}} w_j \: w_k = \sum_{e \in i \cap i' } w_j \: w_k \: \frac{\delta_{ej} \: \delta_{ek}}{\delta_{ej} + \delta_{ek}}.
\end{align}
Finally, we have:
\begin{align*}
\frac{1}{\alpha \beta} \: \langle v_0, v_{ij} \rangle &= \sum_{e \in i} \sum_{k \in N} w_j \: w_k \: z_{ek} \: M_{\{\delta_{ej},\delta_{ek}\}} = \sum_{e \in i} \sum_{k \in N} \frac{\delta_{ej} \delta_{ek}}{\delta_{ej} +\delta_{ek}} w_j \: w_k \: z_{ek}\\ &= w_j \sum_{e \in i} \sum_{k \in N} \frac{\delta_{ej}}{\delta_{ej} +\delta_{ek}} \: p_{ek} \: z_{ek}
\end{align*}
where the last equality follows by plugging in the definition of $\delta_{ek} = p_{ek}/w_k$.

Let us now check that this solution is indeed feasible for \emph{(SDP-SR)}. The second set of constraints \eqref{sdp_sr_c2} is satisfied due to \eqref{eq_rand_sdp_cons}, the fact that $\alpha = 1$, as well as observing that $rs/(r+s) \leq \min\{r,s\}$ for all $r,s \geq 0$. The first set of constraints \eqref{sdp_sr_c1} under the above fitting gives:
\begin{align*}
y_j &\leq \sum_{e \in i} w_j \: p_{ej} - \frac{1}{2}\Vert v_{ij} \Vert ^ 2 + \langle v_0, v_{ij} \rangle \\
&\iff \alpha \beta \: w_j \: C_j(x) \leq \left(1 - \frac{\alpha^2}{4} \right) \sum_{e \in i} w_j \: p_{ej} + \alpha \beta \: w_j \sum_{e \in i} \sum_{k \in N} \frac{\delta_{ej}}{\delta_{ej} +\delta_{ek}} \: p_{ek} \: z_{ek} \\
&\iff C_j(x) \leq \sum_{e \in i} \Big(p_{ej} +  \sum_{k \in N} \frac{\delta_{ej}}{\delta_{ej} +\delta_{ek}} \: p_{ek} \: z_{ek} \Big).
\end{align*}
We have simplified both sides by $\alpha \beta \: w_j = (1 - \alpha^2/4) w_j$ in the last equivalence, which holds by the first property of Lemma \ref{lemma_alpha_beta_rand_2}. These inequalities are now clearly satisfied by the Nash conditions \eqref{eq_Nash_rand}, implying that our fitted solution is in fact feasible. The objective value of our solution can be lower bounded as:
\[\sum_{j\in N} y_j  - \frac{1}{2} \Vert v_0 \Vert ^ 2 \geq \alpha \beta \sum_{j \in N} w_j C_j(x) - \frac{\beta^2}{2}C(x) = \left( \alpha \beta - \frac{\beta^2}{2}\right)C(x) = \frac{15}{32}\: C(x)\]
where the last equality follows from the second property of Lemma \ref{lemma_alpha_beta_rand_2}.
\end{proof}

We now show that this bound can be improved if we consider the natural special case where the processing time of each player is proportional to its weight for every resource. This means that every resource has a real-value $\lambda_e \geq 0$, and the processing time of every player satisfies $p_{ej} \in \{\lambda_e w_j, \infty\}$ for every $e \in E, j \in N$. Observe that this means that the Smith ratios are uniform for the jobs assigned to a resource: $\delta_{ej} = p_{ej}/w_j = \lambda_e$. The only difference with respect to the previous proof will be a change of constants $\alpha, \beta$.

\begin{lemma}
\label{lemma_alpha_beta_rand_mod}
Let $\alpha, \beta \geq 0$ be defined as $\alpha := 2/\sqrt{3}$ and $\beta := 1/\sqrt{3}$. The following two properties hold:
\begin{itemize}
\item $1 - \alpha^2/4 = \alpha \beta$
\item $\alpha \beta - \beta^2/2 = 1/2$
\end{itemize}
\end{lemma}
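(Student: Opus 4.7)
The plan is a direct substitution, entirely analogous to the (trivial) proofs of Lemmas \ref{lemma_alpha_beta_cong} and \ref{lemma_alpha_beta_rand_2}: compute $\alpha^2$, $\alpha\beta$, and $\beta^2$ from the given values and verify that the two claimed equalities hold. With $\alpha = 2/\sqrt{3}$ and $\beta = 1/\sqrt{3}$, one gets $\alpha^2/4 = 1/3$, so the first identity reduces to $1 - 1/3 = 2/3 = \alpha\beta$, and $\beta^2/2 = 1/6$, so the second reduces to $2/3 - 1/6 = 1/2$. There is no real obstacle; the entire proof can be written in one display.

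The substantive content of the lemma is not the verification but the calibration of the constants. The first identity is precisely what is needed so that, under the proportional-weight assumption $\delta_{ej} = \lambda_e$, the fitted inequality coming from constraint \eqref{sdp_sr_c1} in the proof of Theorem \ref{thm_dual_fitting_rand} can be divided through by $\alpha\beta\, w_j = (1-\alpha^2/4)\, w_j$ to match the Nash conditions. The improvement from the constant $\alpha^2/2$ (present in Lemma \ref{lemma_alpha_beta_cong}) to $\alpha^2/4$ here reflects the sharper bound available under Rand with uniform Smith ratios, where $\delta_{ej}\delta_{ek}/(\delta_{ej}+\delta_{ek}) = \lambda_e/2$ gives an extra factor of two. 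The second identity then yields the coefficient $\alpha\beta - \beta^2/2 = 1/2$ in front of $C(x)$ in the dual objective, which in turn translates into the improved price-of-anarchy bound of $2$ in the proportional-weight regime, as opposed to the $32/15$ bound of Theorem \ref{thm_dual_fitting_rand}.

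Accordingly, when writing the lemma up I would simply state the two identities and verify them in a single displayed computation, emphasising in the surrounding text that the choice of $\alpha$ and $\beta$ is dictated by these two identities — the first for feasibility of the SDP fitting, the second for maximizing the resulting dual objective.
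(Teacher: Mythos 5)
Your verification is correct and matches the paper, whose proof is simply "The proof is immediate": substituting $\alpha = 2/\sqrt{3}$, $\beta = 1/\sqrt{3}$ gives $\alpha^2/4 = 1/3$, $\alpha\beta = 2/3$, and $\beta^2/2 = 1/6$, from which both identities follow. The surrounding commentary on how these constants are calibrated against the SDP feasibility and objective is accurate but, as you note, is context rather than part of the proof itself.
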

\begin{proof}
The proof is immediate.
\end{proof}

\begin{theorem}
If the Smith ratios are uniform for every resource, for any instance of the above game and any Nash equilibrium $x$, there exists a feasible (SDP-SR) solution with value at least $1/2 \: C(x)$. This implies that the coordination ratio of the game is at most $2$.
\end{theorem}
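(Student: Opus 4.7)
The plan is to reuse essentially the same dual fitting as in the proof of Theorem \ref{thm_dual_fitting_rand}, working in the inner product space $\mathcal{G}(E,M)$ with the same positive-definite matrix $M_{r,s} = rs/(r+s)$ shown in \cite{cole2011inner}, but replacing the constants by $\alpha := 2/\sqrt{3}$ and $\beta := 1/\sqrt{3}$ of Lemma \ref{lemma_alpha_beta_rand_mod}. Concretely, I would take the vectors
\[ v_0(e,r) = -\beta \sum_{k \in N} w_k z_{ek} \mathds{1}_{\{\delta_{ek} = r\}}, \qquad v_{ij}(e,r) = \alpha w_j \mathds{1}_{\{e \in i\}} \mathds{1}_{\{\delta_{ej} = r\}}, \qquad y_j = \alpha\beta w_j C_j(x), \]
so that all the inner product and norm computations done in Theorem \ref{thm_dual_fitting_rand}, in particular the bound $\Vert v_0 \Vert^2 \leq \beta^2 C(x)$ via \eqref{eq_soccost_rand}, carry over verbatim. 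The Nash equilibrium $x$ with respect to the \emph{Rand} mechanism still satisfies \eqref{eq_Nash_rand}, and the social cost to be bounded is still the one in \eqref{eq_soccost_rand}.

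The single structural gain I would exploit is in the second SDP constraint \eqref{sdp_sr_c2}. Under the uniformity assumption, $\delta_{ej} = \lambda_e$ for every $j$ with $p_{ej} < \infty$, so for any pair of strategies containing a common resource $e$ we have
\[ \frac{\delta_{ej}\delta_{ek}}{\delta_{ej}+\delta_{ek}} = \frac{\lambda_e}{2} = \frac{1}{2}\min\{\delta_{ej},\delta_{ek}\}, \]
which is stronger by a factor of $2$ than the generic inequality $rs/(r+s) \leq \min\{r,s\}$ used in the proof of Theorem \ref{thm_dual_fitting_rand}. Consequently the computation $\langle v_{ij}, v_{i'k}\rangle = \alpha^2 \sum_{e \in i \cap i'} w_j w_k \delta_{ej}\delta_{ek}/(\delta_{ej}+\delta_{ek})$ now admits any $\alpha \leq \sqrt{2}$, and in particular $\alpha = 2/\sqrt{3}$ is feasible.

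Next, I would verify the first set of constraints \eqref{sdp_sr_c1}: substituting the fitting and dividing both sides by $\alpha\beta w_j$ turns it into exactly \eqref{eq_Nash_rand}, where the balancing identity $1 - \alpha^2/4 = \alpha\beta$ of Lemma \ref{lemma_alpha_beta_rand_mod} replaces the corresponding identity used in Theorem \ref{thm_dual_fitting_rand}. Finally, the dual objective becomes
\[ \sum_{j \in N} y_j - \tfrac{1}{2} \Vert v_0 \Vert^2 \geq (\alpha\beta - \beta^2/2)\, C(x) = \tfrac{1}{2}\, C(x), \]
using the second property of Lemma \ref{lemma_alpha_beta_rand_mod}, which by weak duality gives the coordination ratio bound of $2$. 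I do not expect a real obstacle here: the only genuine insight is recognising that uniform Smith ratios make the harmonic-vs-minimum gap tight up to a factor $2$; the optimal choice of $(\alpha,\beta)$ is then the solution of a one-variable maximisation of $\alpha\beta - \beta^2/2$ subject to $1 - \alpha^2/4 = \alpha\beta$ and $\alpha \leq \sqrt{2}$, and it is conveniently recorded in Lemma \ref{lemma_alpha_beta_rand_mod}.
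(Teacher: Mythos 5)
Your proposal matches the paper's proof essentially verbatim: same dual fitting in $\mathcal{G}(E,M)$ as Theorem \ref{thm_dual_fitting_rand}, same replacement of constants by $\alpha = 2/\sqrt{3}$, $\beta = 1/\sqrt{3}$, and the same observation that uniform Smith ratios turn $\delta_{ej}\delta_{ek}/(\delta_{ej}+\delta_{ek})$ into $\frac{1}{2}\min\{\delta_{ej},\delta_{ek}\}$, relaxing the feasibility requirement on constraint \eqref{sdp_sr_c2} from $\alpha^2 \leq 1$ to $\alpha^2 \leq 2$. The argument is correct and complete.
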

\begin{proof}
Let $\alpha, \beta$ be as in Lemma \ref{lemma_alpha_beta_rand_mod}. The only part of the proof of Theorem \ref{thm_dual_fitting_rand} which breaks down under these new constants is the fact that the second set of constraints \eqref{sdp_sr_c2} of the SDP might be violated, since we now have $\alpha^2 = 4/3 > 1$. Indeed \eqref{eq_rand_sdp_cons} states that:
\[\langle v_{ij}, v_{i'k} \rangle = \alpha ^2 \sum_{e \in i \cap i' } w_j \: w_k \: \frac{\delta_{ej} \: \delta_{ek}}{\delta_{ej} + \delta_{ek}}. \]
The proof of Theorem \ref{thm_dual_fitting_rand} used the easy observation that $rs/(r+s) \leq \min\{r,s\}$ for every $r,s \geq 0$ to argue feasibility of the solution. Observe that this bound is very close to tight when $s \gg r$ (or vice versa). In the case of uniform Smith ratios, we can get an improved bound since $\delta_{ej} = \delta_{ek} = \lambda_e$:
\[\langle v_{ij}, v_{i'k} \rangle = \alpha ^2 \sum_{e \in i \cap i' } w_j \: w_k \: \frac{\lambda_e}{2} \leq \sum_{e \in i \cap i' } w_j \: w_k \: \lambda_e =  \sum_{e \in i \cap i' } w_j \: w_k \: \min\{\delta_{ej}, \delta_{ek}\}\]
where the inequality follows since $\alpha^2/2 = 4/6 \leq 1$.
By the second property of Lemma \ref{lemma_alpha_beta_rand_mod}, the objective value can now be lower bounded as
\[\sum_{j\in N} y_j  - \frac{1}{2} \Vert v_0 \Vert ^ 2 \geq \alpha \beta \sum_{j \in N} w_j C_j(x) - \frac{\beta^2}{2}C(x) = \left( \alpha \beta - \frac{\beta^2}{2}\right)C(x) = \frac{1}{2}\: C(x). \qedhere\]
\end{proof}

We now show that this bound of $2$ can also be attained for arbitrary instances if we consider the price of anarchy of the game, rather than the coordination ratio, meaning that we now compare against the optimal solution under the \emph{Rand} policy. More precisely, we compare against the best possible assignment $x$, whose expected cost is measured if every resource uses the \emph{Rand} policy to process the players. Note that this cost is always higher than the cost if every resource were to use \emph{Smith's Rule}. In that case, a relaxation giving a valid lower bound on the social optimum is the following, we call it \emph{(SDP-RAND)}. The computation of the cost matrix $C$ to plug-in in \eqref{dual_sdp} in this setting is once again left to Appendix \ref{sec_comp_dual}.

\begin{align*}
    \max \sum_{j \in N} y_j  - \frac{1}{2}  &\Vert v_0 \Vert ^ 2 \\
    y_j &\leq \sum_{e \in i} w_j \: p_{ej} - \frac{1}{2}\Vert v_{ij} \Vert ^ 2 + \langle v_0, v_{ij} \rangle \qquad \qquad \forall j \in N, \forall i \in \mathcal{S}_j\\
    \langle v_{ij}, v_{i'k} \rangle &\leq 2 \sum_{e \in i \cap i'} w_{j} \: w_{k} \: \frac{\delta_{ej} \: \delta_{ek}}{\delta_{ej} + \delta_{ek}}\hspace{2.2cm} \forall (i,j) \neq (i',k) \text{ with } j,k \in N.
\end{align*}

\begin{theorem}
For any instance of the above game under the Rand policy, and for any Nash equilibrium $x$, there exists a feasible (SDP-RAND) solution with value at least $1/2 \: C(x)$. This implies that the price of anarchy of the game is at most $2$.
\end{theorem}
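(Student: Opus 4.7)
The plan is to reuse the dual fitting from Theorem \ref{thm_dual_fitting_rand} verbatim but with the improved constants of Lemma \ref{lemma_alpha_beta_rand_mod}, namely $\alpha := 2/\sqrt{3}$ and $\beta := 1/\sqrt{3}$. Concretely, working in the inner product space $\mathcal{G}(E,M)$ with $M_{r,s} := rs/(r+s)$ (which is positive definite, as used in the previous proof), I set
\[
v_0(e,r) := -\beta \sum_{k \in N} w_k \: z_{ek} \: \mathds{1}_{\{\delta_{ek} = r\}}, \qquad v_{ij}(e,r) := \alpha \: w_j \: \mathds{1}_{\{e \in i\}} \: \mathds{1}_{\{\delta_{ej} = r\}}, \qquad y_j := \alpha\beta \: w_j \: C_j(x).
\]
The norm and inner product computations are identical to those in Theorem \ref{thm_dual_fitting_rand}; in particular $\Vert v_0 \Vert^2 \leq \beta^2 \: C(x)$ by \eqref{eq_soccost_rand} and $z_{ej}^2 \leq z_{ej}$, and $\langle v_{ij}, v_{i'k} \rangle = \alpha^2 \sum_{e \in i \cap i'} w_j w_k \delta_{ej}\delta_{ek}/(\delta_{ej}+\delta_{ek})$.

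Verifying the first set of SDP-RAND constraints is the same algebraic manipulation as before: the first property of Lemma \ref{lemma_alpha_beta_rand_mod} ($1 - \alpha^2/4 = \alpha\beta$) makes the constraint collapse exactly to the Nash inequalities \eqref{eq_Nash_rand}, which hold by assumption. The only conceptual difference lies in the second set of constraints. In Theorem \ref{thm_dual_fitting_rand} the RHS was $\sum_{e \in i \cap i'} w_j w_k \min\{\delta_{ej},\delta_{ek}\}$, so one had to use $rs/(r+s) \leq \min\{r,s\}$ and $\alpha \leq 1$. In SDP-RAND the RHS is instead $2 \sum_{e \in i \cap i'} w_j w_k \delta_{ej}\delta_{ek}/(\delta_{ej}+\delta_{ek})$, which matches our inner-product expression exactly up to the factor $\alpha^2$; feasibility therefore only requires $\alpha^2 \leq 2$, and $\alpha^2 = 4/3$ satisfies this comfortably.

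With feasibility settled, the objective value is lower bounded by $\alpha\beta \sum_{j \in N} w_j C_j(x) - \beta^2/2 \cdot C(x) = (\alpha\beta - \beta^2/2) \: C(x) = \frac{1}{2} \: C(x)$ by the second property of Lemma \ref{lemma_alpha_beta_rand_mod}, which together with weak duality yields the claimed price of anarchy bound of $2$.

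There is essentially no hard step here: the entire proof is a recycling of the coordination-ratio argument. The only point worth flagging is the conceptual one, namely that the optimum under Rand gives rise to off-diagonal cost matrix entries $C_{\{ij,i'k\}} = w_j w_k \delta_{ej}\delta_{ek}/(\delta_{ej}+\delta_{ek})$ for $e \in i \cap i'$, rather than the $\min\{\delta_{ej},\delta_{ek}\}$ bound coming from Smith's Rule; this relaxes the pairwise constraint by exactly the factor $2$ that allows us to push $\alpha$ from $1$ up to $2/\sqrt{3}$, which is where the improvement from $32/15$ to $2$ comes from.
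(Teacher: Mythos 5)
Your proof is correct and follows exactly the same approach as the paper: reuse the Rand fitting from Theorem \ref{thm_dual_fitting_rand} with the constants of Lemma \ref{lemma_alpha_beta_rand_mod}, observe that the weaker second set of constraints in (SDP-RAND) tolerates $\alpha^2 \leq 2$ (whereas (SDP-SR) required $\alpha^2 \leq 1$), and read off the objective bound from $\alpha\beta - \beta^2/2 = 1/2$. Your closing remark correctly identifies the conceptual source of the improvement, which is precisely the point the paper makes as well.
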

\begin{proof}
The proof is identical to the one of Theorem \ref{thm_dual_fitting_rand}, but with the modified constants $\alpha, \beta$ stated in Lemma \ref{lemma_alpha_beta_rand_mod}. This new choice of constants is not valid for \emph{(SDP-SR)}, due to the fact that $\alpha^2 >1$. Indeed, equation \eqref{eq_rand_sdp_cons} means that the second set of constraints \eqref{sdp_sr_c2} of \emph{(SDP-SR)} might now be violated. However, the second set of constraints of \emph{(SDP-RAND)} is always satisfied, since $\alpha^2 \leq 2$. The objective function guarantee follows from the second property of Lemma \ref{lemma_alpha_beta_rand_mod}.
\end{proof}

\section{Weighted affine congestion games}
\label{sec_weig_cong}
In this section, we consider the classic weighted affine congestion game. The price of anarchy of this game was settled in \cite{awerbuch2005price, christodoulou2005price} with a tight bound of $(3 + \sqrt{5})/2$ and this bound can also be obtained through a dual fitting argument on a convex program \cite{kulkarni2014robust}. We show here how to recover this bound in a simple way through our approach. For simplicity of presentation, we assume in this section that the Nash equilibria considered are pure, extensions to more general equilibrium notions can be found in Appendix \ref{section_robust}.

The setting is the following. There is a set $N$ of players and a set $E$ of resources. The strategy set for each player $j \in N$ is denoted by $\mathcal{S}_j \subseteq 2^E$ and is a collection of subsets of resources. Let us also assume that we have unrelated weights $w_{ej} \geq 0$ for every $j \in N, e \in E$. Given a strategy profile $x$, the \emph{load} of a resource is given by:
\[\ell_e(x) := \sum_{j \in N} w_{ej} \sum_{i \in \mathcal{S}_j : \: e \in i} x_{ij}.\]
The cost incurred by a player $j$ for a pure assignment $x$ is then given by
\[C_j(x) := \sum_{i \in \mathcal{S}_j} x_{ij} \sum_{e \in i} w_{ej} \; (a_e \: \ell_e(x) + b_e)\]
where $a_e, b_e \in \mathbb{R}_{\geq 0}$ for every $e \in E$. The social cost then becomes:
\begin{equation}
\label{eq_scost_cong}
C(x) := \sum_{j \in N} C_j(x) = \sum_{e \in E} a_e \: \ell_e(x)^2 + b_e \:  \ell_e(x)
\end{equation}
where the last equality holds by changing the order of summation and using the definition of $\ell_e(x)$. 

The Nash equilibrium conditions imply the following constraints for every $j \in N, i \in \mathcal{S}_j$:
\begin{align}
\label{eq_nash_cong}
C_j(x) \leq \sum_{e \in i} w_{ej} \; \Big(a_e \: (\ell_e(x) + w_{ej}) + b_e\Big) =  \sum_{e \in i} w_{ej}(a_e \: w_{ej} + b_e) + \sum_{e \in i} w_{ej} \: a_e \: \ell_e(x).
\end{align}
Indeed, if a player $j \in N$ decides to switch to a strategy $i \in \mathcal{S}_j$, then the load on every resource $e \in i$ can go up by at most $w_{ej}$. The semidefinite relaxation \eqref{dual_sdp} in this special case becomes the following, we call it \emph{(SDP-CG)}.
\begin{align*}
    \max \sum_{j \in N} y_j  - \frac{1}{2}  &\Vert v_0 \Vert ^ 2 \\
    y_j &\leq \sum_{e \in i} w_{ej}(a_e \: w_{ej} + b_e) - \frac{1}{2}\Vert v_{ij} \Vert ^ 2 + \langle v_0, v_{ij} \rangle \qquad \forall j \in N, \forall i \in \mathcal{S}_j\\
    \langle v_{ij}, v_{i'k} \rangle &\leq 2 \sum_{e \in i \cap i'} a_e \: w_{ej} \: w_{ek} \hspace{4cm} \forall (i,j) \neq (i',k) \text{ with } j,k \in N.
    \end{align*}
\begin{theorem}
For any instance of the above game, and any Nash equilibrium $x$, there exists a feasible (SDP-CG) solution with objective value at least $2/(3+\sqrt{5})$C(x).
\end{theorem}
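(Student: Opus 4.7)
My plan is to follow the template of the load-balancing toy example from Section \ref{section_SDP}, extended to handle both the quadratic congestion term $a_e \ell_e(x)^2$ and the affine term $b_e \ell_e(x)$ in the social cost \eqref{eq_scost_cong}. I would work in the standard Euclidean space $\mathbb{R}^E$ and reuse the constants from Lemma \ref{lemma_alpha_beta_cong}, namely $\alpha^2 = 2/\sqrt{5}$ and $\beta = 1/\alpha - \alpha/2$, so that $1 - \alpha^2/2 = \alpha\beta$ and $\alpha\beta - \beta^2/2 = 2/(3+\sqrt{5})$. The proposed fitting is $v_{ij}(e) := \alpha \: \sqrt{a_e} \: w_{ej} \: \mathds{1}_{\{e \in i\}}$ for every $j \in N, i \in \mathcal{S}_j$, together with $v_0(e) := -\beta \: \sqrt{a_e} \: \ell_e(x)$ and $y_j := \alpha \beta \: C_j(x)$. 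The presence of $\sqrt{a_e}$ is the natural way to match the first SDP constraint, where $a_e$ appears linearly on the RHS.

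The key inner products and norms are $\Vert v_{ij} \Vert^2 = \alpha^2 \sum_{e \in i} a_e w_{ej}^2$, $\langle v_{ij}, v_{i'k}\rangle = \alpha^2 \sum_{e \in i \cap i'} a_e w_{ej} w_{ek}$, $\langle v_0, v_{ij}\rangle = -\alpha\beta \sum_{e \in i} a_e w_{ej} \ell_e(x)$, and $\Vert v_0 \Vert^2 = \beta^2 \sum_{e \in E} a_e \ell_e(x)^2 \leq \beta^2 \: C(x)$, where the last inequality follows from \eqref{eq_scost_cong} and $b_e \ell_e(x) \geq 0$. The second set of (SDP-CG) constraints is immediate from $\alpha^2 \leq 2$. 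For the first set, substituting the fitting reduces the required inequality to $\alpha \beta \: C_j(x) \leq (1-\alpha^2/2) \sum_{e\in i} a_e w_{ej}^2 + \sum_{e \in i} w_{ej} b_e + \alpha\beta \sum_{e\in i} a_e w_{ej} \ell_e(x)$; combining the identity $1 - \alpha^2/2 = \alpha\beta$ from Lemma \ref{lemma_alpha_beta_cong} with the estimate $\alpha \beta \leq 1$ shows this is implied by multiplying the Nash condition \eqref{eq_nash_cong} through by $\alpha\beta$. The objective lower bound then follows: $\sum_j y_j - (1/2) \Vert v_0 \Vert^2 \geq \alpha\beta \: C(x) - (\beta^2/2) \: C(x) = 2/(3+\sqrt{5}) \: C(x)$.

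The main delicate point I anticipate is the treatment of the linear term $b_e \ell_e(x)$: it appears only linearly in the social cost, so absorbing it directly into $v_0$ would create cross terms incompatible with the second SDP constraint. The resolution is to let $v_0$ encode only the quadratic $a_e$-part of the load and to exploit the inequality $\alpha \beta \leq 1$ (automatic from $\alpha \beta = 1 - \alpha^2/2$) to absorb the linear $b_e$-contribution when reconciling the scaled Nash inequality with the first set of SDP constraints. Once this is recognized, the verification becomes essentially a direct copy of the load-balancing calculation, with the identity $1 - \alpha^2/2 = \alpha\beta$ playing the same pivotal role.
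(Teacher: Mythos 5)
Your proposal is correct and matches the paper's proof essentially verbatim: the same fitting $v_0(e)=-\beta\sqrt{a_e}\,\ell_e(x)$, $v_{ij}(e)=\alpha\sqrt{a_e}\,w_{ej}\,\mathds{1}_{\{e\in i\}}$, $y_j=\alpha\beta\,C_j(x)$ with the Lemma~\ref{lemma_alpha_beta_cong} constants, the same inner product computations, and the same observation that the linear $b_e$-term can be absorbed because $\alpha\beta=1-\alpha^2/2\le 1$. The only cosmetic difference is that the paper phrases the $b_e$-slack as ``the SDP's right-hand side has coefficient $1$ on $\sum_{e\in i}w_{ej}b_e$ while the scaled Nash inequality would only require $1-\alpha^2/2$,'' which is the same estimate you invoke.
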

\begin{proof}
The vectors of the SDP will live in the space $\mathbb{R}^E$. Let $\alpha, \beta \geq 0$ be defined as in Lemma \ref{lemma_alpha_beta_cong}. We now state the dual fitting:
\begin{itemize}
\item $v_0(e) := \beta \: \sqrt{a_e} \: \ell_e(x)$
\item $v_{ij}(e) := \alpha \: \sqrt{a_e} \: w_{ej} \: \mathds{1}_{\{e \in i\}}$ \qquad \qquad \qquad  $\forall j \in N, i \in \mathcal{S}_j$
\item $y_j := \alpha \beta \; C_j(x) \hspace{3.9cm} \forall j \in N.$
\end{itemize}
Let us now compute the different inner products and norms that we need.
\begin{itemize}
\item $\Vert v_0 \Vert ^ 2 = \beta^2 \; \sum_{e \in E} \: a_e \: \ell_e(x)^2 \leq \beta^2 \: C(x)$
\item $\Vert v_{ij} \Vert ^ 2 = \alpha^2 \; \sum_{e \in i} a_e \: w_{ej}^2$
\item $\langle v_0, v_{ij} \rangle = \alpha \beta \; \sum_{e \in i} a_e \: w_{ej} \: \ell_e(x)$
\item $\langle v_{ij}, v_{i'k} \rangle = \alpha^2 \; \sum_{e \in i \cap i'} a_e \: w_{ej} \: w_{ek}$
\end{itemize}
Let us now check feasibility of the solution. The second set of constraints is satisfied by the fourth computation above and the fact that $\alpha^2 = 2/\sqrt{5} \leq 2$. The first set of constraints is satisfied due to the Nash conditions \eqref{eq_nash_cong}. Indeed, under the above fitting, for every $j \in N, i \in \mathcal{S}_j$, the first set of SDP constraints reads:
\begin{align*}
\alpha \beta \: C_j(x) &\leq (1 - \alpha^2/2) \sum_{e \in i} a_e \: w_{ej}^2 + \sum_{e \in i} w_{ej} \: b_e + \alpha \beta \; \sum_{e \in i} a_e \: w_{ej} \: \ell_e(x).
\end{align*}
If there was a factor of $(1 - \alpha^2/2) \leq 1$ multiplying the term $\sum_{e \in i} w_{ej} \: b_e$, then this would be equivalent to \eqref{eq_nash_cong} because of the first condition of Lemma \ref{lemma_alpha_beta_cong}. Not having this term only increases the right-hand side and thus ensures that this set of constraints is satisfied, implying that the SDP solution is feasible. The objective function can now be lower bounded as:
\[\sum_{j \in N} y_j  - \frac{1}{2}  \Vert v_0 \Vert ^ 2 \geq  \alpha \beta \sum_{j \in N}C_j(x) - \frac{\beta^2}{2} \: C(x) = \left(\alpha \beta - \frac{\beta^2}{2}\right) C(x) = \frac{2}{3 + \sqrt{5}} \: C(x)\]
where the last equality follows by the second property of Lemma \ref{lemma_alpha_beta_cong}.
\end{proof}

\section{Analyzing local search algorithms for scheduling}
\label{sec_local_opt}
We now show that this approach can also be useful to bound the approximation ratio of local search algorithms. We focus on the $R || \sum w_j C_j$ scheduling problem, for which the \emph{(SDP-SR)} relaxation \eqref{sdp_sr} becomes the following:

\begin{align*}
    \max \sum_{j \in J} y_j  - \frac{1}{2}  &\Vert v_0 \Vert ^ 2 \\
    y_j &\leq w_j p_{ij} - \frac{1}{2}\Vert v_{ij} \Vert ^ 2 + \langle v_0, v_{ij} \rangle \hspace{2.2cm} \forall j \in J, \forall i \in \mathcal{S}_j\\
    \langle v_{ij}, v_{i'k} \rangle &\leq  w_j \: w_k \: \min\{\delta_{ij}, \delta_{ik}\} \:  \mathds{1}_{\{i = i'\}} \hspace{2cm} \forall (i, j) \neq (i',k) \text{ with } j,k \in J.
\end{align*}

\noindent Given an assignment $x \in \{0,1\}^{M \times J}$, the completion time of every job $j \in J$ is:
\[C_j(x) = \sum_{i \in M} x_{ij} \Big(p_{ij} + \sum_{k \prec_i j} p_{ik} x_{ik}\Big).\]
Let us also define the following quantity for every $j \in J$:
\begin{equation}
D_j(x) = \sum_{i \in M} \sum_{k \succ_i j} w_k \: p_{ij} \: x_{ij} \: x_{ik}
\end{equation}
and let us denote the weighted sum of processing times as:
\begin{equation}
\eta(x) = \sum_{i \in M} \sum_{j \in J} w_j \: p_{ij} \: x_{ij}.
\end{equation}
The total cost can then be written in the following ways:
\begin{align}
C(x) = \sum_{j \in J} w_j C_j(x) = \eta(x) + \sum_{i \in M} \sum_{j \in J} \sum_{k \prec_i j} w_j \: p_{ik} \: x_{ij} \: x_{ik} \label{cost_1}\\
C(x) = \eta(x) + \sum_{j \in J} D_j(x) = \eta(x) + \sum_{i \in M} \sum_{j \in J} \sum_{k \succ_i j} w_k \: p_{ij} \: x_{ij} \: x_{ik}. \label{cost_2}
 \end{align}
 
 \subsection{A simple and natural local search algorithm}
A natural and simple local search algorithm for this problem is to move a job from one machine to another if that improves the objective function. If such an improvement is not possible, then a local optimum $x \in \{0,1\}^{M \times J}$ has been reached. Such a local optimum is called a $\emph{JumpOpt}$ in \cite{correa2022performance}, and it is shown that the local optimality implies the following constraints. We provide a proof for the sake of completeness.
\begin{lemma}
\label{lemma_local_opt}
For any local optimum JumpOpt solution $x$ of the scheduling problem $R || \sum w_j C_j$, the following constraints are satisfied:
\begin{align*}
w_j C_j(x) + D_j(x) \leq w_j \: p_{ij} + \sum_{k \in J \setminus \{j\}} w_j w_k \min\{\delta_{ij}, \delta_{ik}\} \: x_{ik} \qquad \forall j \in J, \forall i \in M.
\end{align*}
\end{lemma}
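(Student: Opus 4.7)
The plan is to exploit the defining property of a local optimum: moving any single job to a different machine cannot decrease the social cost. Fix $j \in J$ and $i \in M$, and let $i^\star \in M$ denote the machine currently hosting $j$ under $x$, so that $x_{i^\star j} = 1$. I will consider the modified assignment $x'$ obtained by reassigning $j$ from $i^\star$ to $i$ while leaving every other job in place, and use the inequality $C(x') \geq C(x)$ guaranteed by \emph{JumpOpt} local optimality.

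As a warm-up (and to handle the case $i = i^\star$ for free), I first rewrite the left-hand side of the target inequality. Since $x_{i'j} = 1$ only for $i' = i^\star$, the definition of $D_j(x)$ collapses to $D_j(x) = \sum_{k \succ_{i^\star} j} w_k \: p_{i^\star j} \: x_{i^\star k}$, and $C_j(x) = p_{i^\star j} + \sum_{k \prec_{i^\star} j} p_{i^\star k} \: x_{i^\star k}$. Substituting $p_{i^\star k} = w_k \delta_{i^\star k}$ in the predecessor terms and $p_{i^\star j} = w_j \delta_{i^\star j}$ in the successor terms, and using that $k \prec_{i^\star} j$ gives $\delta_{i^\star k} \leq \delta_{i^\star j}$ while $k \succ_{i^\star} j$ gives $\delta_{i^\star j} < \delta_{i^\star k}$, both kinds of terms collapse into $w_j w_k \min\{\delta_{i^\star j}, \delta_{i^\star k}\}$, yielding
\begin{equation*}
w_j C_j(x) + D_j(x) = w_j p_{i^\star j} + \sum_{k \in J \setminus \{j\}} w_j w_k \min\{\delta_{i^\star j}, \delta_{i^\star k}\} \: x_{i^\star k},
\end{equation*}
which is the statement for $i = i^\star$ with equality.

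For $i \neq i^\star$, the main step is to carefully decompose $C(x') - C(x)$ into three contributions, mirroring the structure of \eqref{cost_1}--\eqref{cost_2}: (i) the change $w_j(C_j(x') - C_j(x))$ in $j$'s own weighted completion time; (ii) the savings $-\: p_{i^\star j}\sum_{k \succ_{i^\star} j} w_k \: x_{i^\star k} = -D_j(x)$ on machine $i^\star$, where the jobs that used to follow $j$ no longer wait for its processing time; and (iii) the additional load $p_{ij}\sum_{k \succ_i j} w_k \: x_{ik}$ incurred on machine $i$ by the jobs that end up behind $j$ once $j$ is inserted at its Smith-ratio position. Here it is important to use that, once $j$ migrates to $i$, Smith's rule re-orders the jobs on $i$ so that $C_j(x') = p_{ij} + \sum_{k \prec_i j} p_{ik} \: x_{ik}$, and the only jobs whose completion time increases by exactly $p_{ij}$ are those with $k \succ_i j$ among the current occupants of $i$. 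Local optimality $C(x') \geq C(x)$ then rearranges into
\begin{equation*}
w_j C_j(x) + D_j(x) \;\leq\; w_j C_j(x') + p_{ij}\sum_{k \succ_i j} w_k \: x_{ik}.
\end{equation*}

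Finally, I would simplify the right-hand side by the same $p_{ek} = w_k \delta_{ek}$ trick: predecessor terms in $w_j C_j(x')$ give $w_j w_k \delta_{ik} = w_j w_k \min\{\delta_{ij},\delta_{ik}\}$, while the successor contribution $p_{ij}\sum_{k \succ_i j} w_k \: x_{ik}$ gives $w_j w_k \delta_{ij} = w_j w_k \min\{\delta_{ij},\delta_{ik}\}$, leaving the leading $w_j p_{ij}$ untouched. Combining these gives exactly the claimed inequality. The only real subtlety is the bookkeeping of the Smith-rule reordering on the target machine $i$: one must argue that only the successors of $j$ in the new ordering are delayed, and that every delay equals $p_{ij}$; beyond that the argument is a careful but routine accounting exercise.
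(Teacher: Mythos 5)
Your proposal is correct and follows essentially the same route as the paper's proof: move job $j$ from its current machine $i^\star$ to $i$, decompose the change in social cost into the change in $j$'s weighted completion time plus the saved delay on $i^\star$ (which is exactly $-D_j(x)$) plus the added delay on $i$, and invoke local optimality. Your extra bookkeeping — the explicit treatment of $i=i^\star$ and the spelled-out $\min\{\delta_{ij},\delta_{ik}\}$ simplification — fills in steps the paper leaves as a final "observe that this is exactly the statement of the lemma," but is not a different argument.
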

\begin{proof}
Fix a job $j$ assigned to machine $i^* \in M$ in the local optimum $x$ and let us assume that this job switches to machine $i \in M$. The difference of weighted completion times for job $j$ is
\[w_j \Big(p_{ij} + \sum_{k \prec_i j} p_{ik} x_{ik}\Big) - w_j \Big(p_{i^*j} + \sum_{k \prec_i^* j} p_{i^*k} x_{i^*k}\Big).\]
Moreover, the only other jobs for which the completion time is modified are the jobs assigned to $i^*$ and $i$ coming after $j$ in the ordering of the respective machine. Due to the switch of $j$, these jobs assigned to $i^*$ have their completion time decreased, whereas the ones assigned to $i$ have their completion time increased. The total difference in cost for these jobs is then
\[\sum_{k \succ_{i} j} w_k \: p_{ij} \: x_{ik} - \sum_{k \succ_{i^*} j} w_k \: p_{i^*j} \: x_{i^*k}.\]
Since $x$ is a local optimum for the global objective function, the total difference in cost (i.e. the sum of the two expressions above) should be non-negative. After rearranging the terms, this is equivalent to  
\[w_j \Big(p_{i^*j} + \sum_{k \prec_{i^*} j} p_{i^*k} x_{i^*k}\Big) + \sum_{k \succ_{i^*} j} w_k \: p_{i^*j} \: x_{i^*k} \leq w_j \Big(p_{ij} + \sum_{k \prec_i j} p_{ik} x_{ik}\Big) + \sum_{k \succ_{i} j} w_k \: p_{ij} \: x_{ik}.\]
Observe that this is exactly the statement of the lemma, finishing the proof.
\end{proof}
We now show that we can recover the tight approximation ratio of $(3 + \sqrt{5})/2$ given in \cite{correa2022performance} using our dual fitting approach. Observe the analogy with the proof strategy for the price of anarchy in the previous section. The main difference is that the Nash conditions are replaced by the local optimality conditions of Lemma \ref{lemma_local_opt}, and the $y$ variables are fitted differently.
\begin{theorem}
For any JumptOpt local optimum $x$ of the scheduling problem $R || \sum w_j C_j$, there exists a feasible (SDP-SR) solution with value at least $2/(3 + \sqrt{5}) \: C(x)$.
\end{theorem}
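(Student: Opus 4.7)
The plan is to mimic the Proportional Sharing proof from Section \ref{sec_cong_coor}, using the same inner product space $\mathcal{F}(M)$, the same constants $\alpha^2 := 2/\sqrt{5}$ and $\beta := 1/\alpha - \alpha/2$ from Lemma \ref{lemma_alpha_beta_cong}, and essentially the same vector fitting. The only two substantive changes will be (i) an augmented fitting of $y_j$ which must absorb the extra term $D_j(x)$ appearing on the left-hand side of the local-optimality condition of Lemma \ref{lemma_local_opt}, and (ii) a slightly different bookkeeping in which both $\sum_j y_j$ and $\|v_0\|^2$ will turn out to carry the same factor $2C(x) - \eta(x)$, so that we can still conclude with the ratio $(3+\sqrt{5})/2$.

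Concretely, I will set $v_{ij}(i',t) := \alpha\, w_j\, \mathds{1}_{\{i=i'\}} \mathds{1}_{\{t \leq \delta_{ij}\}}$, $v_0(i,t) := -\beta \sum_{k \in J} w_k\, x_{ik}\, \mathds{1}_{\{t \leq \delta_{ik}\}}$, and, crucially, $y_j := \alpha\beta\,(w_j C_j(x) + D_j(x))$. The second set of SDP constraints is then satisfied by the same computation as in the congestion setting together with $\alpha^2 \leq 1$. For the first set of SDP constraints, plugging in the fitting and using the identity $1 - \alpha^2/2 = \alpha\beta$ from Lemma \ref{lemma_alpha_beta_cong} reduces the inequality, after dividing through by $\alpha\beta$, to precisely the JumpOpt condition of Lemma \ref{lemma_local_opt}, so feasibility follows.

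For the objective value, the computation of $\|v_0\|^2$ mirrors the one in the proof of Theorem \ref{thm_Smith_Rule}, but with $z_{ej}$ replaced by $x_{ij}$ and exploiting $x_{ij}^2 = x_{ij}$; combined with identity (\ref{cost_1}) this yields
\[ \tfrac{1}{\beta^2}\|v_0\|^2 = \eta(x) + 2\bigl(C(x) - \eta(x)\bigr) = 2C(x) - \eta(x). \]
For the $y$-side, identity (\ref{cost_2}) gives $\sum_j D_j(x) = C(x) - \eta(x)$, hence
\[ \sum_j y_j = \alpha\beta\bigl(C(x) + (C(x)-\eta(x))\bigr) = \alpha\beta\,\bigl(2C(x) - \eta(x)\bigr). \]
Combining these two identities, using $\eta(x) \leq C(x)$, and applying the second property of Lemma \ref{lemma_alpha_beta_cong} yields
\[ \sum_j y_j - \tfrac{1}{2}\|v_0\|^2 = \bigl(\alpha\beta - \tfrac{\beta^2}{2}\bigr)\bigl(2C(x) - \eta(x)\bigr) \geq \tfrac{2}{3+\sqrt{5}}\, C(x), \]
which is the desired bound.

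The main conceptual point, and the only non-routine step, is to notice that the JumpOpt condition naturally invites inflating $y_j$ by $D_j(x)$ on top of $w_j C_j(x)$, and that the resulting $\sum_j y_j$ and $\|v_0\|^2$ then share the common factor $2C(x) - \eta(x)$. This is what lets the bound $(3+\sqrt{5})/2$ survive despite the fact that the $R \,\|\, \sum w_j C_j$ social cost lacks the factor-$2$ cross-term that Proportional Sharing enjoyed: the ``missing'' factor is effectively recovered from the $D_j(x)$ contribution to $y_j$ afforded by local optimality. Everything else is a direct adaptation of already-performed calculations, so I do not anticipate any further obstacle.
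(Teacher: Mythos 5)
Your proposal is correct and follows essentially the same route as the paper: identical inner product space $\mathcal{F}(M)$, identical constants from Lemma~\ref{lemma_alpha_beta_cong}, the same augmented choice $y_j = \alpha\beta\,(w_j C_j(x) + D_j(x))$, and the same reduction of the first SDP constraints to Lemma~\ref{lemma_local_opt} and of the objective to $\bigl(\alpha\beta - \beta^2/2\bigr)\bigl(2C(x)-\eta(x)\bigr)$ via \eqref{cost_1} and \eqref{cost_2}. The only difference is that you spell out the $\sum_j y_j = \alpha\beta(2C(x)-\eta(x))$ bookkeeping a bit more explicitly than the paper does, which is merely a presentation choice.
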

\begin{proof}
We assume that the SDP vectors belong to the space $\mathcal{F}(M)$, which is without loss generality by Lemma \ref{lemma_inner_product_space}. Let us fix $\alpha, \beta$ as in Lemma \ref{lemma_alpha_beta_cong}, i.e. $\alpha^2 := 2/\sqrt{5}$ and $\beta := 1/\alpha - \alpha/2$. We now state the dual fitting:
\begin{itemize}
\item $v_0(i,t) := \beta \sum_{k \in J} w_k \: x_{ik} \: \mathds{1}_{\{t \leq \delta_{ik}\}}$
\item $v_{ij}(i',t) := \alpha \; w_j \: \mathds{1}_{\{ t \leq \delta_{ij}\}} \: \mathds{1}_{\{ i = i' \}}  \hspace{2.5cm} \forall j \in J, \forall i \in \mathcal{S}_j$
\item $y_j := \alpha \beta \: \Big(w_j \: C_j(x) + D_j(x)\Big) \hspace{2.8cm} \forall j \in J.$
\end{itemize}
The desired inner products and norms can be computed to be the following, using essentially the same computations as in the proof of Theorem \ref{thm_Smith_Rule}:
\begin{align*}
\frac{1}{\beta^2}\:\Vert v_0 \Vert ^ 2 &= 2 C(x) - \eta(x) \qquad 
\frac{1}{\alpha \beta}\langle v_0, v_{ij} \rangle = \sum_{k \in J} w_j \: w_k \:  \min\{\delta_{ij}, \delta_{ik}\} \: x_{ik}\\
\frac{1}{\alpha^2}\:\Vert v_{ij} \Vert ^ 2 &= w_j \: p_{i j}  \hspace{1.6cm}
\frac{1}{\alpha^2}\: \langle v_{ij}, v_{i'k} \rangle = w_{j} \: w_{k} \: \min{\{\delta_{ij}, \delta_{ik}\}} \: \mathds{1}_{\{i = i'\}.}
\end{align*}
The second set of SDP constraints is satisfied due to the last computation above and the fact that $\alpha^2 \leq 1$. The first set of constraints under this fitting gives:
\begin{align*}
y_j &\leq w_j p_{ij} - \frac{1}{2}\Vert v_{ij} \Vert ^ 2 + \langle v_0, v_{ij} \rangle \\
&\iff \alpha \beta \: \Big( w_j \: C_j(x) + D_j(x) \Big) \leq \left(1 - \frac{\alpha^2}{2}\right) w_j \: p_{ij} + \alpha \beta \sum_{k \in J} w_j \: w_k \: x_{ik} \:  \min\{\delta_{ij}, \delta_{ik}\}.
\end{align*}
These are satisfied by Lemma \ref{lemma_alpha_beta_cong}, which states that $1 - \alpha^2/2 = \alpha \beta$, as well as the local optimality conditions of Lemma \ref{lemma_local_opt}. The objective function can now be lower bounded as:
\begin{align}
\sum_{j \in J} y_j  - \frac{1}{2} \Vert v_0 \Vert ^ 2 &= \alpha \beta \Big(2 C(x) - \eta(x)\Big) - \frac{\beta^2}{2}  \Big(2 C(x) - \eta(x)\Big) = \frac{2}{3 + \sqrt{5}}\Big(2 C(x) - \eta(x)\Big) \nonumber \\
&\geq \frac{2}{3 + \sqrt{5}} \: C(x) \label{eq_obj_sr}
\end{align}
where the first equality follows from \eqref{cost_1} and \eqref{cost_2}, the second equality follows from the second property of Lemma \ref{lemma_alpha_beta_cong} and the inequality follows from $\eta(x) \leq C(x)$.
\end{proof}
We now show as in \cite{correa2022performance} that one can get an improved bound for the restricted identical machines setting, denoted by $P | \mathcal{M}_j | \sum w_j C_j$. The improvement comes from the fact that for a \emph{JumpOpt} solution $x$ and an optimal solution $x^*$, we have $\eta(x) = \eta(x^*) = \sum_{j \in J} w_j p_j$ in this setting. This means that, instead of bounding $\eta(x) \leq C(x)$ in the last step of \eqref{eq_obj_sr}, we can now use the stronger upper bound $\eta(x) \leq C(x^*)$.

\begin{theorem}
For any JumptOpt local optimum $x$ of the scheduling problem $P | \mathcal{M}_j| \sum w_j C_j$, there exists a feasible (SDP-SR) solution with value at least $2/(3 + \sqrt{5}) \: (2 C(x) - C(x^*))$. By weak duality, this implies that the approximation ratio of $x$ is at most $(5 + \sqrt{5})/4 \approx 1.809$.
\end{theorem}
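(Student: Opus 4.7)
The plan is to reuse, essentially verbatim, the dual fitting constructed in the previous theorem (for $R || \sum w_j C_j$) and only modify the very last step of the objective value estimate, exploiting the structural feature that makes $P | \mathcal{M}_j | \sum w_j C_j$ easier: on restricted identical machines, the processing time $p_{ij} = p_j$ depends only on the job. Consequently, for any feasible assignment $x$ of the jobs to admissible machines, one has
\[\eta(x) = \sum_{i \in M} \sum_{j \in J} w_j\, p_{ij}\, x_{ij} = \sum_{j \in J} w_j\, p_j,\]
which is a constant independent of the assignment. In particular, $\eta(x) = \eta(x^*)$, and since $\eta(x^*) \leq C(x^*)$ (the weighted sum of processing times is a trivial lower bound on the weighted sum of completion times), we get the sharper bound $\eta(x) \leq C(x^*)$ rather than the weaker $\eta(x) \leq C(x)$ used in the previous theorem.

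First, I would take the same vectors and scalars as the previous proof, namely
\[v_0(i,t) := -\beta \sum_{k \in J} w_k\, x_{ik}\, \mathds{1}_{\{t \leq \delta_{ik}\}}, \qquad v_{ij}(i',t) := \alpha\, w_j\, \mathds{1}_{\{t \leq \delta_{ij}\}}\, \mathds{1}_{\{i = i'\}},\]
and $y_j := \alpha \beta\, (w_j C_j(x) + D_j(x))$, with $\alpha, \beta$ as in Lemma \ref{lemma_alpha_beta_cong}. Feasibility for \emph{(SDP-SR)} is already established in the previous theorem: the second set of constraints holds because $\alpha^2 \leq 1$, and the first set of constraints reduces, via the identity $1 - \alpha^2/2 = \alpha\beta$, to precisely the local-optimality inequalities of Lemma \ref{lemma_local_opt}. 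Nothing in the feasibility check uses the structure of $R || \sum w_j C_j$ beyond the local optimality conditions, so this step carries over immediately.

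The key step is then the objective value computation. Reusing the computation of $\Vert v_0 \Vert^2 = \beta^2(2C(x) - \eta(x))$ from the previous proof and the second property of Lemma \ref{lemma_alpha_beta_cong}, one obtains
\[\sum_{j \in J} y_j - \frac{1}{2} \Vert v_0 \Vert^2 = \Bigl(\alpha \beta - \frac{\beta^2}{2}\Bigr) (2 C(x) - \eta(x)) = \frac{2}{3 + \sqrt{5}} (2C(x) - \eta(x)).\]
At this point, instead of bounding $\eta(x) \leq C(x)$, I would use $\eta(x) = \eta(x^*) \leq C(x^*)$, yielding the claimed dual objective $\frac{2}{3+\sqrt{5}}(2C(x) - C(x^*))$.

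Finally, by weak duality applied to \emph{(SDP-SR)} (whose optimum lower bounds $C(x^*)$, since on a restricted identical machine instance Smith's Rule on $x^*$ coincides with the optimal schedule), I would get
\[C(x^*) \geq \frac{2}{3+\sqrt{5}} (2 C(x) - C(x^*)),\]
and rearranging gives $C(x)/C(x^*) \leq \bigl(1 + (3+\sqrt{5})/2\bigr)/2 = (5+\sqrt{5})/4$. There is no real obstacle here beyond spotting the substitution $\eta(x) = \eta(x^*)$; the entire machinery of the dual fitting is already in place, and this theorem is essentially a one-line refinement of the $R||\sum w_j C_j$ analysis leveraging the invariance of $\eta$ under reassignment.
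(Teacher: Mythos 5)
Your proposal is correct and follows exactly the same route as the paper: reuse the dual fitting from the $R||\sum w_j C_j$ case verbatim, observe that $\eta(x) = \eta(x^*) = \sum_j w_j p_j$ is assignment-invariant on restricted identical machines, replace the bound $\eta(x) \leq C(x)$ by $\eta(x) \leq C(x^*)$ in \eqref{eq_obj_sr}, and conclude via weak duality. The algebra leading to $(5+\sqrt{5})/4$ also checks out.
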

\begin{proof}
By upper bounding $\eta(x) \leq C(x^*)$ in the last step of \eqref{eq_obj_sr}, we get the first statement of the theorem. By weak duality, and since the dual solution constructed is feasible, we get that
\[\frac{2}{3 + \sqrt{5}} \Big(2 C(x) - C(x^*)\Big) \leq C(x^*) \iff \frac{C(x)}{C(x^*)} \leq \frac{5 + \sqrt{5}}{4}. \qedhere\]
\end{proof}

\subsection{An improved local search algorithm}
In this subsection, we show how our approach allows to analyze an improved local search algorithm for $R || \sum w_j C_j$ by \cite{caragiannis2017coordination} achieving an approximation ratio of $(5 + \sqrt{5})/4 + \varepsilon \approx 1.809 + \varepsilon$ for every $\varepsilon > 0$. To the best of our knowledge, this is the best currently known combinatorial approximation algorithm for this problem. We ignore here the issue of the running time and simply analyze the quality of a local optimum, referring the reader to \cite{caragiannis2017coordination} for further details. 
Let us fix the constant $\gamma := (9 + \sqrt{5})/19 \approx 0.591$. For each job $j \in J$ and an assignment $x$, we keep a potential function
\[f_j(x) = \sum_{i \in M} x_{ij} \left( w_j \: p_{ij} + \gamma \sum_{k \neq j} w_j w_k \min \{\delta_{ij}, \delta_{ik}\} \: x_{ik}\right) \quad \forall j \in J.\]
If a job $j \in J$ can pick a different machine than the one it is currently on and decrease its potential function $f_j(x)$, then this constitutes an improving move for the local search algorithm. If several improving moves exist, the algorithm picks the one giving the largest decrease in $f_j(x)$. For a local optimum $x$, we get the following constraints:
\begin{equation}
\label{eq_local_opt_glocal}
f_j(x) \leq w_j \: p_{ij} + \gamma \sum_{k \neq j} w_j w_k \min \{\delta_{ij}, \delta_{ik} \} \: x_{ik} \qquad \forall j \in J, \forall i \in M.
\end{equation}
As usual with this approach, we first need a small lemma about important constants.
\begin{lemma}
\label{lemma_alpha_beta_glocal}
Let $\alpha, \beta, \gamma \geq 0$ be defined as $\alpha^2 = (\sqrt{5}+1)/5, \: \beta^2 = (\sqrt{5}-1)/5$ and $\gamma = (9 + \sqrt{5})/19$. The following properties hold:
\begin{itemize}
\item $\alpha \beta / \gamma = 1 - \alpha^2/2$
\item $\alpha \beta (2 \gamma -1)/\gamma = \beta^2/2$
\item $2 \alpha \beta - \beta^2 = 4/(5 + \sqrt{5})$
\end{itemize}
\end{lemma}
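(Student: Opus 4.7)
The plan is to verify each of the three identities by direct algebraic manipulation using the explicit values of $\alpha^2$, $\beta^2$, and $\gamma$. The key preparatory step is to observe that $\alpha^2 \beta^2 = \frac{(\sqrt{5}+1)(\sqrt{5}-1)}{25} = \frac{4}{25}$, so that $\alpha\beta = 2/5$. Once this product is in hand, every identity reduces to elementary manipulations of expressions involving $\sqrt{5}$, which can be rationalized in each case by multiplying numerator and denominator by the appropriate conjugate.

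For the third identity, which is the easiest, I would just substitute: $2\alpha\beta - \beta^2 = 4/5 - (\sqrt{5}-1)/5 = (5-\sqrt{5})/5$, and rationalize $4/(5+\sqrt{5}) = 4(5-\sqrt{5})/20 = (5-\sqrt{5})/5$. For the first identity, I would compute $1 - \alpha^2/2 = (9-\sqrt{5})/10$, and then check that $\alpha\beta/\gamma = (2/5) \cdot 19/(9+\sqrt{5}) = 38/(5(9+\sqrt{5}))$ equals the same value after rationalization. For the second identity, I would first simplify $2\gamma - 1 = (2\sqrt{5}-1)/19$, so that $\alpha\beta(2\gamma-1)/\gamma = (2/5)(2\sqrt{5}-1)/(9+\sqrt{5})$; rationalizing by $(9-\sqrt{5})$ in both numerator and denominator yields $(\sqrt{5}-1)/10$, which matches $\beta^2/2$.

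There is no real obstacle here: every step is a direct substitution followed by rationalizing a denominator of the form $a + b\sqrt{5}$. The only minor care point is to keep track of signs in $2\gamma - 1$, which is positive but small, and to double-check the arithmetic in the second identity, since it is the one where the appearance of $\gamma$ is least obvious. Once all three reductions are laid out side by side, the lemma follows immediately, and the specific choice of the constant $\gamma = (9+\sqrt{5})/19$ is revealed as precisely the value that makes identities one and two simultaneously consistent with the fixed pair $(\alpha,\beta)$.
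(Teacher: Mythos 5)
Your proof is correct and takes the approach the paper itself intends: the paper omits the verification entirely, noting only that it "consists of simple computations" or can be checked on a computer. Your observation that $\alpha^2\beta^2 = \tfrac{(\sqrt{5}+1)(\sqrt{5}-1)}{25} = \tfrac{4}{25}$, so $\alpha\beta = \tfrac{2}{5}$, is exactly the right simplifying step, and the subsequent rationalizations for all three identities check out.
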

\begin{proof}
The proof consists of simple computations and is omitted. These equations can also be checked on a computer.
\end{proof}
\begin{theorem}
For any local optimum $x$ of the above local search algorithm for $R || \sum w_j C_j$, there exists a feasible (SDP-SR) solution with value at least $4/(5 + \sqrt{5}) \: C(x)$.
\end{theorem}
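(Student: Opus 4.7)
The plan is to stick to the fitting template of the JumpOpt analysis, working in the inner product space $\mathcal{F}(M)$ (valid by Lemma \ref{lemma_inner_product_space}). I would keep the same definitions of $v_0$ and $v_{ij}$ as in the previous proof, now with the constants $\alpha, \beta, \gamma$ from Lemma \ref{lemma_alpha_beta_glocal}, and fit the scalar part to the new potential function: $y_j := (\alpha\beta/\gamma)\, f_j(x)$. The inner products $\|v_{ij}\|^2$, $\langle v_0, v_{ij}\rangle$, $\langle v_{ij}, v_{i'k}\rangle$ and the norm $\|v_0\|^2$ are then given by the same closed-form expressions already computed in the JumpOpt proof.

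Checking feasibility of (SDP-SR) should be routine. The second set of constraints \eqref{sdp_sr_c2} reduces, as before, to the condition $\alpha^2 \leq 1$, which holds numerically for our choice. Substituting the fit into the first set \eqref{sdp_sr_c1} and using the first property $1 - \alpha^2/2 = \alpha\beta/\gamma$ of Lemma \ref{lemma_alpha_beta_glocal}, the constraint at $(i,j)$ collapses to
\[f_j(x) \leq w_j \, p_{ij} + \gamma \sum_{k \in J} w_j w_k \min\{\delta_{ij}, \delta_{ik}\}\, x_{ik},\]
which is implied by the local-optimality condition \eqref{eq_local_opt_glocal} since the extra $k = j$ term on the right is nonnegative.

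The bulk of the work is the objective-value calculation. Expanding $\sum_{j} f_j(x)$, each ordered pair of distinct jobs on a common machine contributes twice, so using \eqref{cost_1} to identify $\sum_{i,j,k \prec_i j} w_j \, p_{ik}\, x_{ij} x_{ik}$ with $C(x) - \eta(x)$, one obtains
\[\sum_{j \in J} f_j(x) = (1 - 2\gamma)\, \eta(x) + 2\gamma\, C(x),\]
while $\|v_0\|^2 = \beta^2(2C(x) - \eta(x))$ exactly as in the JumpOpt proof. Collecting yields an SDP objective of the form $A\, \eta(x) + B\, C(x)$ with $A = \alpha\beta(1 - 2\gamma)/\gamma + \beta^2/2$ and $B = 2\alpha\beta - \beta^2$.

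The main obstacle, and the reason for the delicate constants in Lemma \ref{lemma_alpha_beta_glocal}, is that one cannot afford to loosen $\eta(x) \leq C(x)$ as was done in the JumpOpt analysis; doing so would recover only the weaker $(3+\sqrt{5})/2$ ratio. Instead, the second property of the lemma is precisely the algebraic identity needed to make $A = 0$ so that the $\eta(x)$ dependence cancels exactly, after which the third property identifies $B$ with $4/(5+\sqrt{5})$, finishing the proof.
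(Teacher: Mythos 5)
Your proposal is correct and follows the paper's proof essentially verbatim: same vector fitting, same $y_j := (\alpha\beta/\gamma) f_j(x)$, the same identity $\sum_j f_j(x) = 2\gamma C(x) - (2\gamma-1)\eta(x)$, and the same use of the second property of Lemma \ref{lemma_alpha_beta_glocal} to make the $\eta(x)$ coefficient vanish exactly. Your observation that the extra $k=j$ term bridges the gap between \eqref{eq_local_opt_glocal} and the first SDP constraint, and that the exact cancellation of $\eta(x)$ is what the tuned $\gamma < 1$ buys, are both precisely the content of the paper's argument.
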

\begin{proof}
We assume that the SDP vectors belong to the space $\mathcal{F}(M)$, which is without loss of generality by Lemma \ref{lemma_inner_product_space}. Let us fix $\alpha, \beta, \gamma$ as in Lemma \ref{lemma_alpha_beta_glocal}. We now state the dual fitting:
\begin{itemize}
\item $v_0(i,t) := \beta \sum_{k \in J} w_k \: x_{ik} \: \mathds{1}_{\{t \leq \delta_{ik}\}}$
\item $v_{ij}(i',t) := \alpha \; w_j \: \mathds{1}_{\{ t \leq \delta_{ij}\}}\mathds{1}_{\{ i = i' \}}  \hspace{2.5cm} \forall j \in J, \forall i \in \mathcal{S}_j$
\item $y_j := \frac{\alpha \beta}{\gamma} \: f_j(x) \hspace{5.1cm} \forall j \in J$.
\end{itemize}
The desired inner products and norms can be computed to be the following, using essentially the same computations as in the proof of Theorem \ref{thm_Smith_Rule}:
\begin{align*}
\frac{1}{\beta^2}\:\Vert v_0 \Vert ^ 2 &= 2 C(x) - \eta(x) \qquad 
\frac{1}{\alpha \beta}\langle v_0, v_{ij} \rangle = \sum_{k \in J} w_j \: w_k \:  \min\{\delta_{ij}, \delta_{ik}\} \: x_{ik}\\
\frac{1}{\alpha^2}\:\Vert v_{ij} \Vert ^ 2 &= w_j \: p_{i j}  \hspace{1.6cm}
\frac{1}{\alpha^2}\: \langle v_{ij}, v_{i'k} \rangle = w_{j} \: w_{k} \: \min{\{\delta_{ij}, \delta_{ik}\}} \: \mathds{1}_{\{i = i'\}.}
\end{align*}
The second set of SDP constraints is satisfied due to the last computation above and the fact that $\alpha^2 \leq 1$. The first set of constraints under this fitting gives:
\begin{align*}
y_j &\leq w_j p_{ij} - \frac{1}{2}\Vert v_{ij} \Vert ^ 2 + \langle v_0, v_{ij} \rangle \\
&\iff \frac{\alpha \beta}{\gamma} \: f_j(x) \leq \left(1 - \frac{\alpha^2}{2}\right) w_j \: p_{ij} + \frac{\alpha \beta}{\gamma} \gamma \: \sum_{k \in J} w_j \: w_k \:  \min\{\delta_{ij}, \delta_{ik}\} \: x_{ik}.
\end{align*}
These are satisfied by Lemma \ref{lemma_alpha_beta_glocal}, as well as the local optimality conditions \eqref{eq_local_opt_glocal}. To argue about the objective, it can be checked (through a simple computation that we omit) that:
\[\sum_{j \in J} f_j(x) = 2 \gamma \: C(x) - (2 \gamma -1) \: \eta(x).\]
The objective function then becomes:
\begin{align*}
\sum_{j \in J} y_j  - \frac{1}{2} \Vert v_0 \Vert ^ 2 &= \frac{\alpha \beta}{\gamma} \Big(2 \gamma \: C(x) - (2 \gamma -1) \: \eta(x)\Big) - \frac{\beta^2}{2}  \Big(2 C(x) - \eta(x)\Big)  \\ 
&= \Big(2 \alpha \beta - \beta^2 \Big) C(x) - \left(\frac{\alpha \beta \: (2 \gamma - 1)}{\gamma} - \frac{\beta^2}{2} \right) \eta(x) \\
& = \Big(2 \alpha \beta - \beta^2 \Big) C(x) = \frac{4}{5 + \sqrt{5}} \: C(x)
\end{align*}
where the two last equalities follow from Lemma \ref{lemma_alpha_beta_glocal}.
\end{proof}

We now provide an almost matching lower bound instance, inspired by constructions in \cite{caragiannis2006tight, correa2022performance}. We believe that the upper bound of $(5 + \sqrt{5})/4 \approx 1.809$ is tight.
\begin{theorem}
There exists an instance of $R || \sum w_j C_j$ with a local optimum to the above local search algorithm with approximation ratio at least $1.791$.
\end{theorem}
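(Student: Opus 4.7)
The plan is to construct a family of instances of $R \| \sum w_j C_j$ parametrized by an integer $n$ and real parameters $(a,b)$, for which the local search algorithm admits a local optimum $x$ with $C(x)/C(x^*)$ tending, as $n\to\infty$ and for a near-optimal choice of $(a,b)$, to a value at least $1.791$. I will follow the spirit of the geometric chain constructions used for weighted affine congestion games in \cite{caragiannis2006tight} and for the JumpOpt analysis in \cite{correa2022performance}, modified to cope with the $\gamma$-weighted potential $f_j$.

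Concretely, I propose an instance with $n$ jobs and $n+1$ machines arranged so that in the local optimum $x$, the jobs form a nested chain on a single overloaded machine (or on a short sequence of machines), whereas in the optimum $x^*$ every job is on its own lightly used machine. Each job $j \in \{1,\dots,n\}$ gets a weight $w_j = a^j$ for some $a>1$, and the processing times are set as $p_{ij}=b \,w_j$ on the two feasible machines for $j$ (its "congested" one in $x$ and its "private" one in $x^*$) and $p_{ij}=\infty$ elsewhere. This proportionality $p_{ij}\in\{b w_j,\infty\}$ makes all Smith ratios among jobs sharing a machine equal, so that the $\min\{\delta_{ij},\delta_{ik}\}$ terms in $f_j$ collapse to a common value and one can obtain closed form expressions for $f_j(x)$, $C_j(x)$, and $C_j(x^*)$.

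The steps would then be the following. First, write down $f_j(x)$ explicitly as a geometric sum in $a$ and show the local optimality inequality \eqref{eq_local_opt_glocal} for every alternative machine of every job; by construction only one alternative is effective (the private machine), and the resulting inequality reduces to a single scalar inequality in $(a,b,\gamma)$ that determines the admissible region of parameters. Second, compute $C(x)$ and $C(x^*)$ as geometric sums, pass to the limit $n\to\infty$, and obtain a closed form expression $R(a,b)=\lim_{n\to\infty} C(x)/C(x^*)$. Third, maximize $R(a,b)$ subject to the local optimality constraint from Step~1 to obtain a concrete lower bound on the approximation ratio.

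The main obstacle will be the parameter calibration in the last step: since $\gamma=(9+\sqrt{5})/19$ is a rather specific algebraic constant, the maximizing $(a,b)$ will not admit a clean closed form, and the bound $1.791$ is best confirmed by solving a small constrained optimization problem numerically (or by a short symbolic computation). A secondary difficulty is making the chain genuinely a local optimum for the particular rule of \cite{caragiannis2017coordination}, which insists on the \emph{largest} improving move: one must verify that no deviation yields any strict decrease of $f_j$, not merely that the assignment is Nash-like; this is handled by slightly perturbing the parameters so the local-optimality inequalities hold with a tiny slack, which does not affect the asymptotic ratio.
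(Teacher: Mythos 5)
Your high-level idea (geometric chain of jobs, each with exactly two feasible machines, pass $n\to\infty$) is the right one and matches the paper's, but the specific parameterization you propose has a genuine flaw: the proportionality assumption $p_{ij}\in\{b w_j,\infty\}$ with a single $b$ on \emph{both} feasible machines cannot yield any ratio exceeding $1$ for a shift-type local optimum, and cannot yield a local optimum at all for a congested one.

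Concretely, if both feasible machines of job $j$ carry the same processing time $b w_j$, then a "shift" local optimum in which each job is alone on a machine (which is what the paper uses) has cost $\sum_j b w_j^2$, exactly equal to the optimum — no gap. If instead you try to create a gap by overloading a machine in $x$, then any job $j$ sharing a machine and having an empty feasible alternative $i$ satisfies $f_j(x) = bw_j^2 + \gamma b w_j(\text{sharing term}) > bw_j^2 = f_j(x_{-j},i)$, so moving strictly decreases $f_j$ and $x$ is not a local optimum; perturbing parameters does not fix this, since the violation is by a nonvanishing constant. The paper instead sets $p_{j,j}=\lambda^{j-1}$, $p_{j+1,j}=\lambda^{j+1}$ with $w_j=\lambda^{-(j-1)}$: the Smith ratio of job $j$ is $\lambda^{2(j-1)}$ on machine $j$ but $\lambda^{2j}$ on machine $j+1$, so the two feasible machines have processing times differing by a factor $\lambda^2$, and it is precisely this $\lambda^2$ inflation on the locally-chosen machine that produces the gap. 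The local optimality condition then reduces to $\lambda^2 \le 1 + \gamma\lambda$, so $\lambda$ is defined as the root of this quadratic; the bound $\lambda^2 \approx 1.79154$ has a clean closed form rather than requiring numerical calibration. To repair your proposal you would need to drop the single-$b$ assumption and allow the processing time on the "bad" machine of job $j$ to be a fixed constant factor larger than on the "good" machine, at which point it essentially becomes the paper's construction.
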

\begin{proof}
Let $\lambda \approx 1.33849$ be the positive solution to the equation $\lambda^2 = 1 + \gamma \: \lambda$. We consider an instance with jobs $J = [n]$ and machines $M = [n+1]$. The weights of the jobs are defined as $w_1 = \lambda$ and $w_j = 1/\lambda^{j-1}$ for every $j \geq 2$. The feasible machines are $\mathcal{S}_j = \{j ,j+1\}$ for every $j \in J$ with processing times $p_{1,1}, p_{2,1} = \lambda$ for the first job and $p_{j,j} = \lambda^{j-1}, p_{j+1, j} = \lambda^{j+1}$ for every $j \geq 2$.

The feasible solution where each job $j$ gets assigned to machine $j$ has cost $\sum_{j \in J} w_j \: p_{j,j} = \lambda^2 + (n-1)$, showing that the optimum solution $x^*$ satisfies $C(x^*) \leq n - 1 + \lambda^2$.

We now claim that the solution $x$ where each job $j$ gets assigned to machine $j+1$ is a local optimum. To see this, observe that the first job clearly cannot decrease his potential function $f_1(x)$ since $p_{1,1} = p_{2,1}$ and no other job is assigned to machine 1 or 2. For $j \geq 2$, we have $f_j(x) = w_j \: p_{j+1, j} = \lambda^2$. If job $j$ were to be reassigned to machine $j$, then \[f_j(x_{-j}, j) = w_j \: p_{j,j} + \gamma \: w_{j-1} w_{j} \min\{\delta_{j, j-1}, \delta_{j,j}\} = 1 + \gamma \: \lambda,\]
which shows that $x$ is a local optimum, by definition of $\lambda$. The cost of this solution is then $\sum_{j \in N} \: w_j \: p_{j+1 ,j} = n \lambda^2$. The approximation ratio of this solution now satisfies \[\frac{C(x)}{C(x^*)} \geq \frac{n \lambda^2}{n-1 + \lambda^2} \xrightarrow{n \to \infty} \lambda^2 \approx 1.79154.\]
Picking $n$ large enough thus finishes the proof.
\end{proof}

\section{Analyzing a greedy online algorithm}
\label{sec_greedy_online}
Consider the following \emph{online} problem for our congestion model: each $j \in N$ arrives online one by one in an adversarial order and reveals its strategy set $\mathcal{S}_j \subseteq 2^E$, its processing times $p_{ej} \geq 0$ for every $e \in E$ and its weight $w_j \geq 0$, at which point an online algorithm needs to irrevocably pick $i \in \mathcal{S}_j$, i.e. set $x_{ij} = 1$ for some $i \in \mathcal{S}_j$. The goal of the algorithm is to minimize the total incurred cost $C(x)$ defined in \eqref{eq_soccost_selfrout}. We denote the total incurred cost after the arrival of $j \in N$ by $C^{(j)}(x)$. The quality of an online algorithm is measured by the \emph{competitive ratio}: the worst-case ratio, over all instances, between the cost of the algorithm and that of the optimal offline solution. Since jobs/players arrive online, we denote them as $N = \{1, \dots, n\}$, where the notation $j < k$ indicates that $j$ arrives before $k$.

\begin{algorithm}
    \caption{Greedy algorithm}
    \label{greedy_algo_LB}
    \begin{algorithmic}
    \State \textbf{when $j \in N$ arrives:}
    \State \qquad Set $x_{ij} = 1$ for $i \in \mathcal{S}_j$ giving the minimal increase in the objective function $C(x)$
    \State \Return $x$
 \end{algorithmic}
 \end{algorithm}

We consider and analyze the above algorithm named \Call{Greedy}{}. Whenever a job $j \in \J$ arrives, \Call{Greedy}{} picks $i \in \mathcal{S}_j$ (i.e. sets $x_{ij} = 1$) which gives the least increase in the global objective function. It is known that this algorithm is $4$-competitive for the scheduling problem $R || \sum w_j C_j$ \cite{gupta2020greed}, we show here how to recover this bound in our congestion model.\footnote{In the conference version of this paper, we also present a different $4$-competitive randomized algorithm.} The key property of the greedy algorithm is the following lemma.
\begin{lemma}
\label{lemma_greedy_SR}
For any adversarial instance of the above online problem, and any solution $(x_{ij})_{j \in \J, i \in \mathcal{S}_j}$ obtained by \Call{Greedy}{}, the following inequalities are satisfied for all $j \in \J$:
\[C^{(j)}(x) - C^{(j-1)}(x) \leq \sum_{e \in i}\left(w_j \: p_{ej} + \sum_{k < j} w_j \: w_k \: \min\{\delta_{ej}, \delta_{ek}\} \: z_{ek} \right) \qquad \forall i \in \mathcal{S}_j.\]
\end{lemma}
\begin{proof}
Consider the online arrival of $j \in \J$. At that moment in time, the total cost summed over all resources is $C^{(j-1)}(x)$.
Let us analyze the increase in cost if $j$ were to pick any $i \in \mathcal{S}_j$. For every resource $e \in i$, the weighted completion time of $j$ gives a contribution of 
\[w_j \Big(p_{ej} + \sum_{k < j, k \prec_e j} p_{ek} \: z_{ek} \Big).\]
Moreover, the only jobs for which the completion time is modified on that resource are the already arrived jobs $k < j$ assigned to that resource which have a higher Smith ratio, since their completion time is pushed further by $p_{ej}$ due to the entrance of $j$. Hence, the increase in objective due to those jobs is:
\[\sum_{k < j, k \succ_e j} w_k \: p_{ej} \: z_{ek}.\]
Now, observe that by definition of the Smith ratio $\delta_{ek} = p_{ek}/w_k$, the total increase in objective on every $e \in i$ (i.e. the sum of the two above quantities) can be written as:
\[ w_j \: p_{ej} + \sum_{k < j} w_j \: w_k \: \min\{\delta_{ej}, \delta_{ek}\} \: z_{ek} \qquad \forall e \in i.\]
The total increase in cost then sums this quantity over every resource $e \in i$. By definition, \Call{Greedy}{} will pick $i \in \mathcal{S}_j$ which gives the smallest increase in the objective function, leading to the statement of the lemma.
\end{proof}
We are now ready to analyze the competitive ratio of \Call{Greedy}{}. We will do so by doing a dual fitting argument on the \emph{(SDP-SR)} semidefinite program, that we rewrite below for convenience.

\begin{align}
\label{sdp_sr_copied}
    \max \sum_{j \in \J} y_j  - \frac{1}{2}  &\Vert \vb \Vert ^ 2 \\
    y_j &\leq \sum_{e \in i} w_j \: p_{ej} - \frac{1}{2}\Vert v_{ij} \Vert ^ 2 + \langle \vb, v_{ij} \rangle \qquad \qquad \forall j \in \J, \forall i \in \mathcal{S}_j \nonumber\\
    \langle v_{ij}, v_{i'k} \rangle &\leq \sum_{e \in i \cap i'} w_{j} \: w_{k} \: \min{\{\delta_{ej}, \delta_{ek}\}}\hspace{2.2cm} \forall (i,j) \neq (i',k) \text{ with } j,k \in \J. \nonumber
\end{align}

The proof of the theorem below is very similar to the one for the price of anarchy of Smith's Rule in Theorem \ref{thm_Smith_Rule}. The main difference is that the Nash equilibrium inequalities are replaced by the inequalities of Lemma \ref{lemma_greedy_SR}, and the $y$ variables are fitted differently. Note that the vector fitting stays the same.

\begin{theorem}
\label{thm_greedy_SR}
For any instance of the above online problem, and any solution $(x_{ij})_{j \in \J, i \in \mathcal{S}_j}$ obtained by \Call{Greedy}{}, there exists a feasible (SDP-SR) solution with objective value at least $C(x)/4$. By weak duality, this implies that the competitive ratio of \Call{Greedy}{} is at most $4$.
\end{theorem}
\begin{proof}
We assume that the SDP vectors live in the inner product space $\mathcal{F}(\M)$, which is without loss of generality by Lemma \ref{lemma_inner_product_space}. 
Let us fix $\beta = 1/2$, we now state the dual fitting for \emph{(SDP-SR)}:
\begin{itemize}
\item $\vb(e,t) := \beta \sum_{k \in \J} w_k \: z_{ek} \: \mathds{1}_{\{t \leq \delta_{ek}\}}$
\item $v_{ij}(e,t) := w_j \: \mathds{1}_{\{ e \in i\}} \: \mathds{1}_{\{ t \leq \delta_{ej}\}}  \hspace{3.2cm} \forall j \in \J, \forall i \in \mathcal{S}_j$
\item $y_j := \beta \: \Big(C^{(j)}(x) - C^{(j-1)}(x)\Big) \hspace{3cm} \forall j \in \J.$
\end{itemize}
Using the same computations as in the proof of Theorem \ref{thm_Smith_Rule}, we get:
\begin{align*}
\frac{1}{\beta^2}\:\Vert v_0 \Vert ^ 2 &\leq \:  2 C(x) \hspace{1.3cm} 
\frac{1}{ \beta}\langle v_0, v_{ij} \rangle = \sum_{e \in i} \sum_{k \in N} w_j \: w_k \: z_{ek} \:  \min\{\delta_{ej}, \delta_{ek}\}\\
\Vert v_{ij} \Vert ^ 2 &= \sum_{e \in i} w_j \: p_{e j}  \hspace{0.9cm}
\langle v_{ij}, v_{i'k} \rangle = \sum_{e \in i \cap i'} w_{j} \: w_{k} \: \min{\{\delta_{ej}, \delta_{ek}\}}.
\end{align*}
Let us now check that this is a feasible solution to \emph{(SDP-SR)}. The second set of constraints is tightly satisfied due to the last computation above. The first set of constraints under the above fitting becomes:
\begin{align*}
y_j \leq \sum_{e \in i} w_j \: p_{ej} - \frac{1}{2}\Vert v_{ij} \Vert ^ 2 & + \langle \vb, v_{ij} \rangle \\
\iff \beta \: \Big(C^{(j)}(x) &- C^{(j-1)}(x)\Big) \leq \frac{1}{2} \sum_{e \in i} w_j \: p_{ej} + \beta \sum_{e \in i} \sum_{k \in \J} w_j \: w_k \: z_{ek} \:  \min\{\delta_{ej}, \delta_{ek}\}.
\end{align*}
By the choice $\beta = 1/2$, these inequalities are satisfied by Lemma \ref{lemma_greedy_SR}, implying that the fitted solution is feasible. To argue about the objective function, note that the sum of the $y$ variables becomes:
\[\sum_{j \in \J}y_j = \beta \sum_{j \in \J} \Big(C^{(j)}(x) - C^{(j-1)}(x)\Big)= \beta \: C(x)\]
where the second equality uses the fact that the sum is telescoping.
The objective function of this SDP can now be lower bounded as:
\begin{align*}
\sum_{j\in N} y_j  - \frac{1}{2} &\Vert \vb \Vert ^ 2 \geq \beta C(x) - \beta^2 C(x) = \frac{C(x)}{4}. \qedhere
\end{align*}
\end{proof}

\section{Concluding remarks}
In this paper, we built on the work of \cite{kulkarni2014robust} which showed a way to use convex programming duality to prove price of anarchy bounds for different games. We showed that a unique convex program turns out to be surprisingly powerful and allows to get tight upper bounds for a large class of congestion and scheduling games. Moreover, it can also be used to bound the approximation ratio of local search algorithms and the competitive ratio of online algorithms for such problems. The dual program has a simple structure with the first set of constraints being similar to equilibrium inequalities, guiding the dual fitting approach. This program also has a natural connection to the inner product space structure developed in \cite{cole2011inner}. It would be interesting if this approach can be extended to new games where price of anarchy bounds are not yet settled. Moreover, all the problems we considered had a quadratic (possibly non-convex) objective function, which made the first round of the Lasserre/Sum of Squares SDP hierarchy powerful enough to write a tractable convex relaxation. It would be interesting if a similar technique can work for problems with a higher degree polynomial objective (an example of which are congestion games with polynomial latency functions) by considering later rounds of the hierarchy. To the best of our knowledge, such dual fitting arguments on semidefinite programs have not been explored much: we hope and believe that there may be additional applications to such an approach.
\subsubsection*{Acknowledgements}
The author would like to thank Zhuan Khye Koh for useful discussions. The author would also like to thank Guido Schäfer for useful discussions, as well as nice feedback on an earlier version of this document. The author also thanks an anonymous reviewer who pointed out that the online greedy algorithm has been analyzed in \cite{gupta2020greed}.
\bibliographystyle{alpha}
\bibliography{references}

\appendix
\section{Recovering the Kawaguchi-Kyan bound for $P || \sum w_j C_j$}
\label{sec_kawa_kyan}
In this section, we show that we can recover the optimal bound of $(1 + \sqrt{2})/2$ for the pure price of anarchy of the scheduling game on parallel machines $P || \sum w_j C_j$, where each machine uses increasing Smith ratios to schedule the jobs. To do so, we make use of a sequence of reductions to worst-case instances provided in \cite{schwiegelshohn2011alternative}. The first assumption that we can make is that $w_j = p_j$ for every job $j$. The \emph{(SDP-SR)} dual semidefinite program shown in \eqref{sdp_sr} and used in Section \ref{sec_local_opt} for $R || \sum w_j C_j$ in this special case becomes the following. We denote the set of jobs by $J$ and the set of machines by $M$.

\begin{align*}
    \max \sum_{j \in J} y_j  - \frac{1}{2}  &\Vert v_0 \Vert ^ 2 \\
    y_j &\leq p_j^2 - \frac{1}{2}\Vert v_{ij} \Vert ^ 2 + \langle v_0, v_{ij} \rangle \qquad \qquad \forall j \in J, \forall i \in M\\
    \langle v_{ij}, v_{i'k} \rangle &\leq  p_j \: p_k \:  \mathds{1}_{\{i = i'\}} \hspace{3cm} \forall (i, j) \neq (i',k) \text{ with } j,k \in J.
\end{align*}

Moreover, the reduction in \cite{schwiegelshohn2011alternative} states that we may assume the instance only has two different processing times $\varepsilon, p > 0$, where $\varepsilon$ is an arbitrarily small constant. Jobs with processing time $\varepsilon$ are called small jobs, and the total workload of these jobs is $|M|$, i.e. the total number of small jobs is $|M|/\varepsilon$. Jobs with processing times $p$ are called large jobs and the total number of large jobs is $k < |M|$, i.e. strictly less than the number of machines. In addition, in a pure Nash equilibrium $x$:
\begin{itemize}
\item All small jobs are started and completed in the interval $[0,1]$.
\item All large jobs are started at $1$.
\end{itemize}
In this reduced instance, it is also possible to get an exact expression for the optimum solution. In particular, define $\alpha := m/(m-k)$ and $\beta := (m + pk)/m$, an optimal solution $x^*$ then has cost:
\[C(x^*) = \begin{cases} kp^2 + \frac{m-k}{2} \alpha^2 \hspace{3cm} \text{if } p \geq \alpha \\
\frac{1}{2}\Big(k p^2 + m \beta^2\Big) \hspace{2.77cm} \text{if } p \leq \alpha. \end{cases}\]
It can then be shown that in both cases $C(x)/C(x^*) \leq (1 + \sqrt{2})/2$ through a simple calculus analysis. The reader is referred to \cite{schwiegelshohn2011alternative} for details. 

We show here that we can construct a feasible dual solution to the SDP matching the objective value of $C(x^*)$, showing that the SDP does not have an integrality gap on such a reduced instance and thus implying that the price of anarchy is at most $(1 + \sqrt{2})/2$ by a dual fitting proof.

\begin{theorem}
For any instance of the above game on the reduced instance, there exists a feasible (SDP-SR) solution with objective value $C(x^*)$, implying that the pure price of anarchy is at most $(1 + \sqrt{2})/2$.
\end{theorem}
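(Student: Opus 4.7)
The plan is to exhibit explicit feasible \emph{(SDP-SR)} dual vectors whose objective value equals $C(x^*)$ on the reduced instance. Weak duality then shows the SDP has no integrality gap on such instances; combined with the reduction and the calculus analysis of \cite{schwiegelshohn2011alternative} — which gives $C(x)/C(x^*) \leq (1+\sqrt{2})/2$ on reduced instances — this yields the claimed price of anarchy bound. For notational clarity I denote the constants appearing in the closed-form expression for $C(x^*)$ by $\alpha_\star := m/(m-k)$ and $\beta_\star := (m+pk)/m$, reserving $(\alpha,\beta)$ for the fitting constants.

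Since $w_j = p_j$ forces all Smith ratios to equal $1$, step functions in $\mathcal{F}(M)$ collapse to constants on $[0,1]$, and I would work directly in $\mathbb{R}^M$. Following the fitting template from Section \ref{sec_cong_coor}, a natural starting point is
\[
v_{ij} \;=\; \alpha\, p_j\, e_i, \qquad v_0 \;=\; -\beta \sum_{i' \in M} L_{i'}\, e_{i'}, \qquad y_j \;=\; \alpha\beta\, p_j\, C_j(x),
\]
where $e_i$ is the $i$-th standard basis vector of $\mathbb{R}^M$, $L_i := \sum_k p_k x_{ik}$ is the load of machine $i$ under the Nash equilibrium $x$, and $(\alpha,\beta)$ satisfy $\alpha \in (0,1]$ together with $1 - \alpha^2/2 = \alpha\beta$.

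First, I would verify feasibility: the second set of \emph{(SDP-SR)} constraints reduces to $\alpha^2 \leq 1$, and the first set, after cancelling the common factor $\alpha\beta\, p_j$, reduces to $C_j(x) \leq p_j + L_i$ for all $j \in J, i \in M$, which is implied by the Nash conditions. Under this fitting the dual objective becomes
\[
\sum_{j} y_j \;-\; \tfrac{1}{2}\|v_0\|^2 \;=\; \alpha\beta\, C(x) \;-\; \tfrac{\beta^2}{2} \sum_{i \in M} L_i^2.
\]
Second, I would compute $C(x)$ and $\sum_i L_i^2$ in closed form on a reduced instance as $\varepsilon \to 0$, using that small jobs exactly fill the interval $[0,1]$ on every machine carrying a large job and that the remaining small-job workload spreads evenly over the other $m-k$ machines. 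Third, I would split into the two cases $p \geq \alpha_\star$ and $p \leq \alpha_\star$ and tune the free parameter of the one-parameter feasibility family so that the resulting closed-form dual objective matches the case-dependent expression for $C(x^*)$ recalled just before the theorem.

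The main obstacle will be this third step: the optimum is piecewise in $p$ and a single choice of $(\alpha,\beta)$ within the standard template may not match both pieces, so a case-dependent choice will likely be needed. Should the vanilla template prove insufficient, I would enlarge it — e.g.\ by letting $v_0$ be built from a convex combination of equilibrium and optimum loads, or by adding an additive correction to the $y_j$ to absorb the residual — relying on the tight structure of the reduced instance to ensure such a refinement suffices. Once the matching is achieved, weak duality gives $\text{SDP} = C(x^*)$ on reduced instances, and the theorem follows.
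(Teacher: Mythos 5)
Your high-level goal is correct — exhibit a feasible dual with value exactly $C(x^*)$, so the SDP has no integrality gap on reduced instances — and the reduction to $\mathbb{R}^M$ is fine. But the fitting you actually propose (the ``vanilla template'' with $v_0$ built from the equilibrium loads, $v_{ij} = \alpha p_j e_i$, and $y_j = \alpha\beta p_j C_j(x)$) provably cannot reach $C(x^*)$, and your proposed repairs point in the wrong direction.

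Concretely, on the reduced instance as $\varepsilon \to 0$ one has $\sum_i L_i^2 = 2C(x) - \eta(x)$ with $\eta(x) = kp^2$, so your objective equals $(\alpha\beta - \beta^2)\,C(x) + \tfrac{\beta^2}{2} k p^2$. Setting this equal to $C(x^*)$ with the constraint $1 - \alpha^2/2 = \alpha\beta$ gives one equation in $\alpha$, and that equation has no real solution in general: for $m=2, k=1, p=2$ (case $p \geq \alpha_\star$, so $C(x)=7$, $C(x^*)=6$) the resulting quadratic $19u^2 - 24u + 20 = 0$ in $u = \alpha^2$ has negative discriminant. So no choice of $\alpha$ (even outside $(0,1]$) makes the template match. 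Your suggested escapes — mixing equilibrium and optimum loads in $v_0$, or adding additive corrections to $y_j$ — do not lead to the construction the paper uses, and the former clashes with the step where the first SDP constraint must follow from Nash conditions stated in terms of equilibrium loads.

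The paper's fitting is qualitatively different from the template in three respects that you would need to discover. First, $v_0$ is a \emph{constant} multiple of the all-ones vector, $v_0 = -\alpha_\star \mathds{1}$ (or $-\beta_\star\mathds{1}$), depending on the case; it is not built from equilibrium or optimum loads at all. Second, the $v_{ij}$ use a job-type-dependent scale: small jobs get $v_{ij}=\varepsilon e_i = p_j e_i$ while large jobs get $v_{ij}=\alpha_\star e_i$ (case $p\geq\alpha_\star$) or $p\,e_i$ (case $p\leq\alpha_\star$), whereas your template forces a uniform $\alpha p_j$. Third, the $y_j$ are set directly to (near-)saturate the first SDP constraints, $y_j = p^2 + \alpha_\star^2/2$ for large and $y_j = \alpha_\star\varepsilon$ for small jobs, rather than being tied to $p_j C_j(x)$; the Nash structure of $x$ never enters. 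Verifying the second set of SDP constraints then only needs orthogonality of the $e_i$ plus the case condition $\alpha_\star \leq p$ (resp.\ $p \leq \alpha_\star$), and the objective collapses to $C(x^*)$ by direct algebra using $m\alpha_\star=(m-k)\alpha_\star^2$ (resp.\ $m\beta_\star^2=(m+pk)\beta_\star$). In short: the gap is the jump from ``saturate Nash inequalities against the equilibrium'' to ``saturate SDP constraints against a constant $v_0$,'' and your proposal does not make it.
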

\begin{proof}
The vectors in our dual fitting will live in the space $\mathbb{R}^M$. Let us denote the total number of machines by $m = |M|$, and let us set $\alpha := m/(m-k)$. We denote by $\mathds{1}$ the all ones vector and by $e_i$ the $i^{th}$ standard basis vector. 

We now state the dual fitting for the case where $p \geq \alpha$:
\begin{itemize}
\item $v_0 = \alpha \: \mathds{1}$
\item If $j$ is a large job, then set $v_{ij} = \alpha \: e_i$ and $y_j = p^2 + \alpha^2/2$
\item If $j$ is a small job, then set $v_{ij} = \varepsilon \: e_i$ and $y_j = \varepsilon \alpha$
\end{itemize}
Let us check that this solution is indeed feasible. Clearly, if $i \neq i'$, then $\langle v_{ij}, v_{i'k} \rangle = 0$ by orthogonality of $e_i$ and $e_{i'}$. For two jobs $j \neq k$, we have that $\langle v_{ij}, v_{ik} \rangle \leq \Vert v_{ij} \Vert \: \Vert v_{ik} \Vert \leq p_j p_k$ where we use Cauchy-Schwarz for the first inequality and the fact that $\alpha \leq p$ if some job is large for the second inequality. This shows that the second set of SDP constraints is satisfied.

Moreover, the first set of constraints is satisfied as well, as the SDP inequalities yield $y_j \leq p^2 + \alpha^2/2$ for large jobs and $y_j \leq \varepsilon^2/2 + \varepsilon \alpha$ for small jobs, which is satified by our choice of $y_j$. The objective value of this dual solution is then:
\begin{align*}
\sum_{j \in J} y_j - \frac{1}{2} \Vert v_0 \Vert ^ 2 &= k \left(p^2 + \frac{\alpha^2}{2}\right) + \frac{m}{\varepsilon}\varepsilon \alpha - \frac{1}{2}m \alpha^2 = kp^2 + \frac{m-k}{2} \alpha^2
\end{align*}
where the first equality follows since the number of small jobs is $m/ \varepsilon$ and the last equality follows by observing that $m \alpha = (m-k)\alpha^2$ by definition of $\alpha$.

For the case where $p \leq \alpha$, we define $\beta:= (m + pk)/m$. We now state the dual fitting:
\begin{itemize}
\item $v_0 = \beta \: \mathds{1}$
\item If $j$ is a large job, then set $v_{ij} = p \: e_i$ and $y_j = p^2/2 + \beta \: p$
\item If $j$ is a small job, then set $v_{ij} = \varepsilon \: e_i$ and $y_j = \varepsilon \beta$
\end{itemize}
Similarly to before, the second set of constraints is satisfied by orthogonality of the standard basis vectors and the fact that $\Vert v_{ij} \Vert = p_j$ for all jobs (either small or large). The first set of constraints yields $y_j \leq p^2/2 + \beta p$ for large jobs and $y_j \leq \varepsilon^2/2 + \varepsilon \beta$ for small jobs, which is clearly satisfied by our choice of $y_j$. The objective value of this dual solution is then:
\[\sum_{j \in J} y_j - \frac{1}{2} \Vert v_0 \Vert ^ 2 = k \left(\frac{p^2}{2} + \beta p\right) + \frac{m}{\varepsilon} \varepsilon \beta - \frac{m \beta^2}{2} = \frac{kp^2}{2} + \beta(kp + m) - \frac{m \beta^2}{2} = \frac{1}{2}(k p^2 + m \beta^2)\]
where the last equality follows by observing that $m \beta^2 = (m+pk)\beta$ due to the definition of $\beta$.
\end{proof}

\section{Preliminaries on SDPs}
\label{sec_sdp_basics}
A symmetric matrix $X \in \mathbb{R}^{n \times n}$ is \emph{positive semidefinite}, denoted as $X \succeq 0$, if the following equivalent conditions hold:
\begin{enumerate}
\item $x^T X x \geq 0$ for all $x \in \mathbb{R}^n$
\item All the eigenvalues of $X$ are non-negative
\item There exists vectors $v_1, \dots, v_n \in \mathbb{R}^d$ for some $d>0$ such that $X_{ij} = \langle v_i, v_j \rangle$ for all $i,j \in [n]$.
\end{enumerate}
For $A, B \in \mathbb{R}^{n \times n}$, the \emph{trace inner product} is defined as:
\[\langle A, B \rangle := \text{Tr}(A^TB) = \sum_{i,j = 1}^n A_{ij} \: B_{ij}.\]
Given symmetric matrices $A_1, \dots, A_m \in \mathbb{R}^{n \times n}$ and $b \in \mathbb{R}^m$, a \emph{semidefinite program (SDP)} in standard form is the following optimization problem:
\[p^* = \sup_X \left\{\langle C,X \rangle : \langle A_k, X \rangle = b_k \; (k \in [m]) \;, X \succeq 0\right\}.\]
Each SDP of that form admits a dual SDP program:
\[d^* = \inf_y \left\{b^T y : Y = \sum_{k = 1}^m y_k A_k - C, Y \succeq 0\right\}.\]
Weak duality holds, meaning that $p^* \leq d^*$. By Property 3 described above, in order to come-up with a feasible dual solution, it is enough to construct $y \in \mathbb{R}^m$, as well as vectors $v_1, \dots, v_n \in \mathbb{R}^d$ in some dimension $d > 0$ such that $Y_{ij} = \left(\sum_{k = 1}^m y_k A_k - C \right)_{ij} = \langle v_i, v_j \rangle$ for every $i,j \in [n]$.
\section{Computation of the dual SDPs}
\subsection{Taking the dual}
\label{sec_take_dual}
Recall that our primal semidefinite programming relaxation is the following.
\begin{align*}
    \min \langle C, X \rangle \qquad  \qquad &\\
    \sum_{i \in \mathcal{S}_j} X_{\{ij,\: ij\}} &= 1 \hspace{4cm} \forall j \in N \\
    X_{\{0,0\}} & = 1 \\
    X_{\{0, \: ij\}} &= X_{\{ij, \: ij\}} \hspace{3cm} \forall j \in N, i \in \mathcal{S}_j \\
    X_{\{ij, \: i'k\}} &\geq 0 \hspace{4cm} \forall {(i,j), (i',k)} \text{ with } j,k > 0. \\
    X & \succeq 0
\end{align*}
It can be easily checked that the following form of semidefinite programs is a primal-dual pair. The dual variables $(\lambda_i)_i$ and $(\mu_j)_j$ respectively correspond to the equality and inequality constraints, whereas the matrix variable $Y$ corresponds to the semidefinite constraint.

\vspace{0.3cm}
\begin{minipage}[t]{0.4\textwidth}
\begin{align*}
    \min \langle C, X \rangle & \\
    \langle A_i, X \rangle & = b_i \quad \forall i \\
    \langle B_j, X \rangle & \geq 0 \quad \: \forall j\\
    X &\succeq 0 
\end{align*}
\end{minipage}
\begin{minipage}[t]{0.6\textwidth}
\begin{align*}
    \max \sum_{i} b_i &\lambda_i \\
    Y &= C - \sum_i \lambda_i A_i  - \sum_j \mu_j B_j \\
    Y &\succeq 0, \quad \mu \geq 0.
\end{align*}
\end{minipage}
\vspace{0.3cm}

\noindent Observe that our above primal SDP is in fact of that form. Let us denote by $(y_j)_{j \in N}, z$ and $(\sigma_{ij})_{j \in N, i \in \mathcal{S}_j}$ the dual variables respectively corresponding to the three sets of equality constraints. Let us denote by $\mu_{\{ij, i'k\}} \geq 0$ the dual variables corresponding to the inequality (or non-negativity) constraints. The dual objective then becomes $\sum_{j \in N} y_j + z$. 

All the games considered will satisfy the fact that the objective matrix is all zeros in the first row and column:  $C_{\{0,0\}} = 0$ and $C_{\{0,ij\}} = 0$ for every $j \in N$ and $i \in \mathcal{S}_j$. The dual matrix equality then becomes:
\begin{align*}
Y_{\{0,0\}} &= -z \\
Y_{\{0, \: ij\}} &= \frac{\sigma_{ij}}{2} \hspace{6cm} \forall j \in N, i \in \mathcal{S}_j\\
Y_{\{ij, \: ij\}} &= C_{\{ij, \; ij\}} - y_j - \sigma_{ij} - \mu_{\{\ij, \: \ij\}} \hspace{2.1cm} \forall j \in N, i \in \mathcal{S}_j\\
Y_{\{ij, \: i'k\}} &= C_{\{ij, \: i'k\}} - \mu_{\{ij,\:  i'k\}} \hspace{3.5cm} \forall (i,j) \neq (i',k) \text{ with } j,k >0.
\end{align*}
Note that we can now eliminate the dual variables $z$ and $\sigma$ by the first two equalities. Moreover, we can eliminate the $\mu \geq 0$ variables by replacing the last two equalities by inequalities. Let us now do the change of variable $Y' = 2Y$ and let the vectors of the Cholesky decomposition of $Y'$ be $v_0$ and $(v_{ij})_{j \in N, i \in \mathcal{S}_j}$, meaning that $Y'_{a, \: b} = \langle v_{a}, v_{b} \rangle$ holds for all the entries of $Y'$. The dual SDP in vector form can then be rewritten as:
\begin{align*}
    \max \sum_{j \in N} y_j  - &\frac{1}{2}\Vert v_0 \Vert ^ 2 \\
    y_j &\leq C_{\{ij, \; ij\}} - \frac{1}{2}\Vert v_{ij} \Vert ^ 2 - \: \langle v_0, v_{ij} \rangle \qquad \qquad \forall j \in N, i \in \mathcal{S}_j\\
    \langle v_{ij}, v_{i'k} \rangle &\leq 2 \: C_{\{ij, \: i'k\}} \hspace{4.3cm} \forall (i,j) \neq (i',k) \text{ with } j,k >0
\end{align*}
Clearly, we can also do a change of variable $v_0' := - v_0$ to get the program \eqref{dual_sdp}.

\subsection{Specializing it to the different games considered}
\label{sec_comp_dual}
Let us now describe how the objective matrix $C$ looks like for the different games that we need. Recall from Section \ref{sec_SDP} that we need to pick a symmetric matrix $C$ such that $C(x) = \langle C, X \rangle = \text{Tr}(C^TX)$ where $X = (1, x) (1, x)^T$ is a binary rank one matrix and $C(x)$ is the social cost. By definition of the trace inner product, this is equivalent to:
\[C(x) = C_{\{0,0\}} + 2\sum_{j \in N, i \in \mathcal{S}_j} C_{\{0, ij\}}\:x_{ij} + \sum_{\substack{j, k \in N\\ i \in \mathcal{S}_j, i' \in \mathcal{S}_k}} C_{\{ij, \: i'k\}}x_{ij} \: x_{i'k}.\]
Recall also that $x_{ij}^2 = x_{ij}$ since $x_{ij} \in \{0,1\}$. Hence, if the social cost does not have constant terms, we will always be able to pick $C$ such that $C_{\{0,0\}} = 0$ and $C_{\{0, ij\}} = 0$ for every $j \in N, i \in \mathcal{S}_j$, which we do for all the games below.

For the congestion game under the \emph{Smith Rule} policy,
the social cost in \eqref{eq_soccost_selfrout} can be written as:
\[C(x) = \sum_{j \in N} \: w_j \: C_j(x) = \sum_{\substack{j \in N \\ i \in \mathcal{S}_j \\ e \in i}} w_j \: p_{ej} \: x_{ij}+ \frac{1}{2}\sum_{\substack{j \in N, k \neq j\\ i \in \mathcal{S}_j, i' \in \mathcal{S}_k \\ e \in i \cap i'}} w_j \: w_k \min\{\delta_{ej}, \delta_{ek}\} \: x_{ij} \: x_{i'k}.\]
Therefore, the objective matrix $C$ is the following: 
\[C_{\{ij, \: ij\}} = \sum_{e \in i} w_j \: p_{ej} \quad, \quad C_{\{ij, \: i'k\}} = \frac{1}{2} \sum_{e \in i \cap i'} w_j \: w_k \: \min\{\delta_{ej}, \delta_{ek}\}.\]
If one considers the scheduling problem $R || \sum w_j C_j$ under \emph{Smith's Rule}, which is a special case of the previous setting, then
\[C_{\{ij, \: ij\}} = w_j \: p_{ij} \quad, \quad C_{\{ij, \: i'k\}} = \frac{1}{2} \: w_j \: w_k \: \min\{\delta_{ij}, \delta_{ik}\} \: \mathds{1}_{\{i = i'\}}.\]
For the congestion game under the \emph{Rand} policy, the social cost in \eqref{eq_soccost_rand} gives
\[C_{\{ij, \: ij\}} = \sum_{e \in i} w_j \: p_{ej} \quad, \quad C_{\{ij, \: i'k\}} = \sum_{e \in i \cap i'} w_j \: w_k \: \frac{\delta_{ej} \delta_{ek}}{\delta_{ej} + \delta_{ek}}.\]
For the weighted affine congestion game, we have seen that 
\[C(x) := \sum_{j \in N} C_j(x) = \sum_{e \in E} a_e \: \ell_e(x)^2 + b_e \:  \ell_e(x)\]
where $\ell_e(x) = \sum_{j \in N} w_{ej} \sum_{i \in \mathcal{S}_j : \: e \in i} x_{ij}$. The objective matrix in that case is
\[C_{\{ij, \: ij\}} = \sum_{e \in i} w_{ej}(a_e \: w_{ej} + b_e) \quad, \quad C_{\{ij, \: i'k\}} = \sum_{e \in i \cap i'} a_e \: w_{ej} \: w_{ek}.\]

\section{Robust price of anarchy}
\label{section_robust}
In this section, we describe how our proofs can be adapted to give bounds on the \emph{coarse-correlated price of anarchy}, meaning that we can now generalize our results by considering \emph{coarse-correlated equilibria}, instead of \emph{mixed} (or \emph{pure}) Nash equilibria. 

Let $N$ be a game with a strategy set $\mathcal{S}_j$ and payoff function $C_j$ for every player $j \in N$. A distribution $\sigma$ over $\mathcal{S}_1 \times \dots \times \mathcal{S}_n$ is a coarse correlated equilibrium if 
\begin{equation}
\label{eq_coarse_corr}
\mathbb{E}_{X \sim \sigma} [C_j(X)] \leq \mathbb{E}_{X \sim \sigma} [C_j(X_{-j}, i)] \qquad \forall j \in N, i \in \mathcal{S}_j.
\end{equation}
Note that this generalizes a mixed Nash equilibrium. In that case, $\sigma$ is a product distribution, i.e. every player $j$ picks a random strategy independently from its own distribution, which we denoted by $(x_{ij})_{i \in \mathcal{S}_j}$ previously in the paper. We note that our formulas for $C_j(x)$ - see for instance \eqref{eq_SR_formula} - for non-binary $x$ (i.e. interpreting $x$ as a collection of probability distributions rather than an integer assignment) implicitly use this independence assumption, meaning that the current proofs do not directly go through for coarse-correlated equilibria. Let us first rewrite \emph{(SDP-C)} \eqref{dual_sdp} in a more convenient matrix form for this argument.

\vspace{0.1cm}
\begin{align}
    \max \sum_{j \in N} \varphi_j  - &\frac{1}{2} \: Y_{\{0,0\}} \nonumber\\
    \varphi_j &\leq C_{\{ij, \; ij\}} - \frac{1}{2} \: Y_{\{ij, ij\}} + \: Y_{\{0, ij\}} \qquad \qquad \forall j \in N, i \in \mathcal{S}_j \nonumber \\
    Y_{\{ij, i'k\}} &\leq 2 \: C_{\{ij, \: i'k\}} \hspace{4.3cm} \forall (i,j) \neq (i',k) \text{ with } j,k \in N. \nonumber \\
    Y &\succeq 0 \nonumber
\end{align}
\vspace{-0.2cm}

One way to generalize our results is to consider random dual \emph{(SDP-C)} solutions, i.e. doing a dual fitting on a realization $X \sim \sigma$, which induces binary random variables $\{X_{ij}\}_{j \in N, i \in \mathcal{S}_j}$ and $\{Z_{ej}\}_{j \in N, e \in E}$. For any price of anarchy dual fitting argument in this paper, first replace every occurence of respectively $x_{ij}$ and $z_{ej}$ by $X_{ij}$ and $Z_{ej}$, in which case $v_0$ and every $y_j$ become random variables (note that every $v_{ij}$ is always deterministic). To get a feasible dual solution, we now set $Y_{a,b} := \mathbb{E}_{X \sim \sigma}[\langle v_a, v_b \rangle$] for every indices $a,b$ as well as $\varphi_j : = \mathbb{E}_{X \sim \sigma}[ \: y_j \: ]$.

The second set of constraints of \emph{(SDP-C)} is always satisfied deterministically in our fittings, while the first set of constraints is satisfied by considering expectations, due to inequality \eqref{eq_coarse_corr}. Moreover, $Y$ is positive semidefinite since it is a convex combination of positive semidefinite matrices.

We thus get a feasible solution with objective value $V$ satisfying $V \geq \rho \: \mathbb{E}_{X \sim \sigma}[C(X)]$ for some desired bound $\rho \in [0,1]$. Since the dual solution is feasible, we have $V \leq C(x^*)$, where $x^*$ is the social optimum. Combining these two equations gives:
\[\mathbb{E}_{X \sim \sigma} [C(X)] \leq \frac{1}{\rho} C(x^*)\]
hence yielding a bound on the coarse-correlated price of anarchy.

\end{document}